\newcommand{\bq}{\begin{equation}}
\newcommand{\eq}{\end{equation}}
\newcommand{\be}{\begin{eqnarray*}}
\newcommand{\ee}{\end{eqnarray*}}
\newcommand{\ben}{\begin{eqnarray}}
\newcommand{\een}{\end{eqnarray}}
\newcommand{\lb}{\left}
\newcommand{\rb}{\right}
\newcommand{\bPhi}{\overline{\Phi}}
\newcommand{\bD}{\overline{D}}
\begin{document}

\title{Portfolio optimization in the case of an asset with a given liquidation time distribution.%\thanks{Grants or other notes
%about the article that should go on the front page should be
%placed here. General acknowledgments should be placed at the end of the article.}
}
\subtitle{}

%\titlerunning{Short form of title}        % if too long for running head

\author{Ljudmila A. Bordag $^{\ast}$       \and
       	Ivan P. Yamshchikov $^{\ast}$ \thanks{$^\ast$ This research was supported by the European Union in the FP7-PEOPLE-2012-ITN Program under Grant Agreement Number 304617 (FP7 Marie Curie Action, Project Multi-ITN STRIKE - Novel Methods in Computational Finance).\vspace{6pt}}	\and
        Dmitry Zhelezov
}

%\authorrunning{Short form of author list} % if too long for running head

\institute{Ljudmila A. Bordag  \at
            University of Applied Sciences Zittau/Goerlitz, Germany \\
            D-02763, Zittau, Theodor-Koerner-Allee, 16\\
              Tel.: +49-3583 61-1488\\
              Fax: +49-3583 61-1262\\
              \email{l.bordag@hszg.de}
           \and
           	Ivan P. Yamshchikov \at
              University of Applied Sciences Zittau/Goerlitz, Germany \\
              \email{i.yamshchikov@hszg.de}
            \and
            Dmitry Zhelezov \at
             Mathematical Sciences, Chalmers, 41296 Gothenburg, Sweden; \\
             Mathematical Sciences, University of Gothenburg, 41296 Gothenburg, Sweden\\
             Tel: +46 31 772 35 82\\
             \email{zhelezov@chalmers.se}
}

\date{Received: date / Accepted: date}
% The correct dates will be entered by the editor

\maketitle

\begin{abstract}
Management of the portfolios containing low liquidity assets is a tedious problem. The buyer proposes the price that can differ greatly from the paper value estimated by the seller, the seller, on the other hand, can not liquidate his portfolio instantly and waits for a more favorable offer. To minimize losses in this case we need to develop new methods. One of the steps moving the theory towards practical needs is to take into account the time lag of the liquidation of an illiquid asset. This task became especially significant for the practitioners in the time of the global financial crises.
Working in the Merton's optimal consumption framework with continuous time we consider an optimization problem for a portfolio with an illiquid, a risky and a risk-free asset. While a standard Black-Scholes market describes the liquid part of the investment the illiquid asset is sold at a random moment with prescribed liquidation time distribution. In the moment of liquidation it generates additional liquid wealth dependent on illiquid assetÕs paper value. The investor has the logarithmic utility function as a limit case of a HARA-type utility. Different distributions of the liquidation time of the illiquid asset are under consideration - a classical exponential distribution and Weibull distribution that is more practically relevant. Under certain conditions we show the existence of the viscosity solution in both cases. Applying numerical methods we compare classical MertonÕs strategies and the optimal consumption-allocation strategies for portfolios with different liquidation-time distributions of an illiquid asset.
\keywords{portfolio optimization \and illiquidity \and viscosity solutions \and random income}
% \PACS{PACS code1 \and PACS code2 \and more}
% \subclass{MSC code1 \and MSC code2 \and more}
\end{abstract}

\section{Introduction}

Since the last financial crises in 2008 the problems connected with development of optimal strategies  for portfolios with illiquid assets are becoming more and more important for large market participants. Many financial institutes in developed countries have to solve practical problems connected with a liquidation of the assets with a very low liquidity or problems with a management of such portfolios especially if such illiquid assets provide stochastic incomes or down payments (like taxes or other obligations).

There exist a reasonable amount of papers devoted  to the problem of the optimal consumption and liquidity. In general, the most challenging task one faces defining a model is to incorporate the illiquidity in a mathematically tractable way. Intuitively it is clear which of the assets we would call liquid. Majority of the researchers would say that these are assets which can be well modeled by the Black-Scholes model.Yet there is still no widely accepted way of defining illiquidity of an asset as a comparable and measurable parameter. It is also important to note that the mathematically correct definition, being a problem itself, is not the biggest challenge in the tasks of such kind. In fact, the exact formulation of the goals of the portfolio optimization is even more tedious. Moreover, the illiquidity is usually connected with the strong bid-ask difference and with an essential lag-time while liquidating such positions. Stochastic processes that describe such effects are not studied profoundly in financial mathematics. All these factors make the problem really interesting and complicated, so we would like to give a brief overview of the models and problems that are relevant to our research.

 \subsection{State-of-the-art.}
 Here we would mention only a few results most relevant to our case of an optimal allocation-consumption problem for a portfolio with an illiquid asset and random income.

In 1993 in order to find the optimal policies {\em Duffie and Zariphopoulou} in \cite{ZaripDuf} develop the framework of the optimal consumption for the continuous time model, proposed by {\em Merton}, \cite{Merton}. They considered an infinite time horizon proved the existence and uniqueness of the viscosity solution of the associated HJB equation for the class of concave utility functions. They posed the following conditions on the utility function $U(c)$
	\begin{eqnarray} \label{A}
 U ~~{\rm in~~} c {~~is~~ strictly~~ concave},~~~C^2(0, +\infty),\nonumber\\	
U(c) \leq M(1 + c)^{\gamma},~~ {\rm with}~~0 < \gamma < 1, M  > 0,\nonumber\\
U(0) \geq 0, \,\,\, {\displaystyle \lim_{c \to 0} U'(c} = +\infty,	\,\,\, \displaystyle \lim_{c \to \infty} U'(c) = 0.\nonumber
\end{eqnarray}
Some years later, in 1997, in \cite{ZaripDufFlem} an extended problem of hedging in incomplete markets with hyperbolic absolute risk aversion (so called HARA) utility function was studied. Here the stochastic income cannot be replicated by trading the available securities. The economical setting was of the following type:
\begin{itemize}
\item The investor receives stochastic income at time moment $t$ at the rate $Y_t$, where $ d Y_t=\mu Y_t +  \eta Y_t d W_t^1, ~~ t\ge 0, Y_0=y, ~~ y\ge 0 $ and $\mu, \eta>0 - const$ here $W^1$ is a standard Brownian motion.
\item The riskless bank account has a constant continuously compound interest rate $r$.
\item A traded security has a price $S$ given by $d S_t=\alpha S_t +  \sigma S_t  (\rho d W_t^1 +\sqrt{1-\rho^2} d W_t^2 )$, $\alpha, ~\sigma>0 - const$  and $W^2$ is an independent standard Brownian motion, $\rho \in (-1,1)$ is a correlation between price process $S_t$ and $Y_t$.
\item The investor utility function for consumption process $c_t$ is given by
\begin{equation}
{\cal U}(c(t))=E\left[{\displaystyle \int_0^{\infty}{e^{-\kappa t} U(c(t)) d t}}\right] \label{ZaripDufFlem},~~U(c(t))=c(t)^{\gamma},
\end{equation}
where $\gamma \in (0,1)$ and $\kappa$ is an discount factor $\kappa >r$.
\item The investors wealth process $L$ evolves $$d L_t=[r L_t  + (\alpha+\delta-r) \pi_t - c_t +Y_t] dt +\sigma \pi_t  (\rho d W_t^1 +\sqrt{1-\rho^2}~ d W_t^2 ), ~~ t\ge 0, L_0=l,$$
         where $\delta$ could be regarded as the dividends payed constantly from an illiquid asset or as the possession costs, the value $l$ is the initial wealth endowment and $\pi_t$ represents the investment in the risky asset $S$, with the remaining wealth held in riskless borrowing or lending. The goal is to characterize the investor value function $$ V(l,y)= {\displaystyle \sup_{(\pi, c) \in {\cal A}(l,y)} }{\cal U}(C).$$
         The set ${\cal A}(l,y)$ is a set of admissible controls $(\pi, c)$ such that $L_t\ge 0$.
\end{itemize}
\begin{remark}
{\em The notation of the strategy $(\pi, c)$ is standard for the problems of such kind. Throughout this paper we will denote the amount of the investment in a liquid risky asset as $\pi$ and investor's consumption as $c$. Both controls do depend on time, so to emphasize it to the reader we might also use $(\pi(t), c(t))$ or even $(\pi_t, c_t)$  from time to time.}
\end{remark}

The authors in \cite{ZaripDufFlem} proved the smoothness of the viscosity solution of the associated Hamilton-Jacobi-Bellman (HJB) equation in the  case of the HARA utility function and the infinite time horizon. This proof heavily relies on a reduction of the initial HJB equation to an ODE. After this reduction the main result follows from the uniform convergence of the classical solution of a uniformly elliptic equation to the viscosity solution, which is unique.

It is important to mention that the authors use the discount factor $e^{-\kappa t}$ in (\ref{ZaripDufFlem}) as a technical factor which is not related to stochastic income. The economical setting does not imply any liquidation of an illiquid asset which provides stochastic income $Y_t$.

 {\em Schwarz and Tebaldi} in \cite{tebaldi} used a model of random income extensively studied before, but interpreted it in an original way. They assumed that the non-traded illiquid asset generates a flow of random income in the form of dividends, until it is sold at a fixed moment of time. The authors define illiquid asset as an asset that can not be sold neither piece by piece nor at once before the investment's horizon, denoted as $T$, which is a fixed deterministic value at which the asset generates a random cash-flow equal to its' paper-value at this moment $T$ (the cash-flow is denoted as $H_T$). This model is generally related to the model of stochastic income studied by Merton in \cite{Merton} (1971), who studied the case of HARA, logarithmic and exponential utility functions and both finite and infinite time horizons. However, since the problem gets another economical reasoning behind it and becomes a sensible model of illiquidity when formulated in this way, it arises high interest and needs a more exact qualitative and quantitive description. Moreover, it gives an intuition of possible extensions, for example, incorporation of other types of utility functions or weakening the trading conditions for an illiquid asset. It all makes this model extremely interesting, so let us describe it in detail.

\subsection{Economical setting of the problem posed by Schwarz and Tebaldi}
	\label{setting1}
A risk-free bank account $B_t$ with the interest rate $r$
	\begin{equation} \label{bond_r}
		dB_t = rB_tdt, \, t \leq T,
	\end{equation}
	where $r$ is assumed to be constant.
 A stock price $S_t$, which follows the geometrical Brownian motion
	\begin{equation} \label{asset_S}
		dS_t = S_t(\alpha dt + \sigma dW^{1}_t), \, t \leq T,
	\end{equation}
	with the continuously compounded rate of return $\alpha > r$ and the standard deviation $\sigma$, $r, \alpha, \sigma - const$.
 An illiquid asset $H_t$ that can not be traded up to the time $T$ and which paper value is correlated with the stock price and follows
	\begin{equation}
		\frac{dH_t}{H_t} = (\mu - \delta)dt + \eta(\rho dW^{1} + \sqrt{1 - \rho^2}dW^{2}), \, t \le T.
	\label{eq:H1}
	\end{equation}
	where $\mu$ is the expected rate of return of the risky illiquid asset, $(W^{1},W^{2})$ are two independent standard Brownian motions, $\delta$ is the rate of dividend
paid by the illiquid asset, $\eta$ is the continuous standard deviation of the rate of return, and $\rho \in(-1; 1)$ is the correlation coefficient between the stock index and the illiquid risky asset. The parameters $\mu$,  $\delta$, $\eta$, $\rho$ are all assumed to be constant, as well as $T$, that is a fixed liquidation time.\\

Authors in \cite{tebaldi} assume that the consumption stream $c(t)$ is admissible if and only if it is positive and there exists a strategy that finances it. All the income is derived from the capital gains and the investor must be solvent. In other words, the liquid wealth process $L_t$ must cover the consumption stream, i.e satisfy the balance equation
\begin{equation}
	 dL_t = (rL_t + \delta H_t - c(t)) dt + \pi(t)\left(\frac{dS_t}{S_t} - r dt \right) + {\sf \Delta}(t - T)H_T dt,~~L_t=l,
  \label{eq:L}
\end{equation}
where the term ${\sf \Delta}(t - T)H_T$ corresponds to the instantaneous liquidation of the illiquid asset in the final time $T$. We use here ${\sf \Delta}$ for a Dirac delta-functional that could be understood as a functional that acts as $ {\sf \Delta}: f(t) \rightarrow f(0)$. This term makes the liquid wealth function discontinuous in the point $T$ when we instantly transform our illiquid asset that had only a paper value before into a sum of liquid capital. Throughout the paper we mostly use the notation from \cite{tebaldi} so that $(\Omega, \mathcal{G}, \mathbb{P})$  denotes a fixed complete probability space and $(\mathcal{G}_t)_{t \geqslant 0}$ is a given filtration. The filtration $\mathcal{G}_t $ is generated by the Brownian motion $W = (W^1, W^2)$.
The optimal portfolio allocation means that the investor wants to maximize the utility of the consumption stream. Because the market is incomplete for the given investor's utility function $U$ the optimal portfolio allocation could be described with the following functional
\begin{equation}
\label{eq1}
	\mathcal{U}(t, c, W^{\pi, c}_T) := E^{\mathbb{P}}_t\left[\int^T_t{e^{-\kappa \tau}U(c(\tau))}d\tau + \beta e^{-\kappa T}U(W^{\pi, c}_T)\right],
\end{equation}
where $W^{\pi, c}_T$ is the total wealth of the investor that uses a strategy $(\pi, c)$ till the moment $T$, parameters $\kappa, \beta$ are constants, and $E^{\mathbb{P}}_t$ denotes a mathematical expectation in terms of $(\Omega, \mathcal{G}, \mathbb{P})$ introduced above. It is important to note that $e^{-\kappa \tau}$ is a discount factor  that could be considered as a technical factor and was typically introduced in the infinite-time cases. \\
The authors studied the problem of choosing the investment and consumption policies $(\pi, c)$ so as to maximize the expected utility when wealth evolves according to the equation (\ref{eq:L}) with $(\pi, c) \in \mathcal{A}(t, l, h) $ where $\mathcal{A}(t, l, h)$ denotes the set of all admissible investment and consumption plans with initial capital $l$ and starting paper value of the illiquid asset equal to $h$.
 The investment problem then is to find such $(\pi_t, c_t) \in \mathcal{A}(t, l, h)$ that the value function of the portfolio $V(t,l,h)$ will be
\begin{equation} \label{eq:sup}
  V(t,l,h)= \sup_{(\pi , c) \in \mathcal{A}(t, l, h)} \mathcal{U}(t, c, W^{\pi, c}_T),
\end{equation}
The authors are the first that we know of who obtain an analytical solution of this problem in the form of an infinite series in the case of time separable power utility of consumption and terminal wealth.
However, as we know, the case of the logarithmic utility was not fully covered up to now. We will discuss it later in Section \ref{expsec} of this paper.

One of the possible extensions of this problem was done by {\em Ang, Papanikolaou and Westerfeld} in \cite{Papanikolaou}. They considered exactly the same model as in  \cite{tebaldi}. However, they assumed that an illiquid asset can be traded but only at infrequent, stochastic moments of time and thus the whole three-asset portfolio could be rebalanced. With a series of numerical calculations they provide an intuition of the influence of illiquidity on the marginal utility of the investor. The authors numerically study the cases when amount of the illiquid wealth is significantly bigger than the amount of the liquid capital and comparing it with the opposite case (insignificantly small amount of illiquid wealth) they show that the effects of the asset being illiquid may cause unbounded deviations from the Merton solution.

In 2008 {\em He} \cite{He} proposed a model with the same set-up but different constraints on illiquid asset. While the investor can instantaneously transfer funds from the liquid to the illiquid asset, the vice versa transaction is allowed only in exponentially distributed moments of time. The author finds an approximate numerical solution of the problem for the constant risk-aversion (CARA) utility function.

In our work we develop a connection between the model of illiquidy \cite{tebaldi} and the optimal consumption problem with an undiversifiable future income. We substitute the somewhat artificial constraint (which is less probable in practice) that the liquidation time  $T$ is fixed from very beginning with the assumption that it is stochastically distributed.
Moreover, we formulate the problem in a general case with an arbitrary liquidation time distribution and some utility functions in the next Section. Then in the Section \ref{expsec} and Section \ref{weibullsec} we investigate two special cases with logarithmic utility function and two liquidation time distributions: exponential one (which was partly studied in the works mentioned above) and the Weibull-distribution (which was used in this framework in \cite{boyazh} for the first time to our knowledge). One of the important contributions of this paper is that using the technique of the viscosity solutions we show the existence and uniqueness of the solution of the HJB equation that corresponds to the case of Weibull-distribution which, in turn, justifies further application of numerical techniques to this problem. Generally speaking, the HJB equation can be degenerate and the value function does not need not be smooth. This could happen due to the market incompleteness, imperfect correlation between the liquid and illiquid asset and a stochasticity of the income. Though this is not the case of this particular article, even in this situation one would like to get a numerical approximation for the value function and optimal policies. Strong stability of the viscosity solutions allows to get such approximations with a range of monotone and consistent schemes (for example, as it was done by {\em Munk} in \cite{Munk}).
\begin{remark}
{\em In \cite{boyazh} we, to our knowledge, for the first time proposed to study an discounting other then exponential, though the exponential case has been attracting a lot of attention. In this paper we clarify and deepen the ideas mentioned in \cite{boyazh} we also provide detailed proofs of all theorems.}
\end{remark}
In Section \ref{problem} we describe our problem in general case. We assume that the utility function is of the HARA type and the liquidation time is a random variable with some typical distribution.

\section{General Problem} \label{problem}
\subsection{Economical setting}
We assume that the investor's portfolio includes a riskless bond, a risky asset and a non-traded asset that generates stochastic income i.e. dividends. The mathematical model is defined similarly to \cite{tebaldi}. We use  $B_t, S_t, H_t$ as described in the previous Section in (\ref{bond_r})~--~(\ref{eq:H1}). However, we replace the time $T$ that was fixed before with a stochastically distributed time $\tau$.

The liquidation time $\tau$ is now a random-distributed continuous variable which does not depend on the Brownian motions $(W^1,W^2)$. The probability density function of $\tau$ distribution is denoted by $\phi(t)$ whereas $\Phi(t)$ denotes the cumulative distribution function, and  $\overline{\Phi}(t)$ the survival function also known as {\em a reliability function $\overline{\Phi}(t) = 1 - \Phi (t)$}. We omit here the explicit notion of the possible parameters of distribution in order to make the formulae shorter.

Given the filtration $\{\mathcal{F}_t \}$ generated by the Brownian motion $W = (W^1, W^2)$ we assume that the consumption process is
an element of the space $\mathcal{L}_+$ of non-negative $\{\mathcal{F}_t \}$-progressively measurable processes
$c_t$ such that
\begin{equation} \label{integrability}
E \left( \int_0^s c(t) dt \right) < \infty, ~~ s \in [0, \tau].
\end{equation}

The investor wants to maximize the average utility consumed up to the time of liquidation, given by
\begin{equation}
	\mathcal{U}(c) := E \left[\int^\tau_0 U(c(t)) dt \right]. \label{eq:sup21}
\end{equation}
Though Merton in \cite{Merton} describes the most general case of the problem and introduces a utility function which depends on $c(t)$ and $t$ in this paper we focus on $U(c(t))$ that does not depend on time explicitly.
The wealth process $L_t$ is the sum of cash holdings in bonds, stocks and {\em random} dividends from the non-traded asset minus the consumption stream. Thus, we can write
\begin{eqnarray}
	 dL_t &=& (rL_t + \delta H_t + \pi_t(\alpha - r) - c_t) dt + \pi_t\sigma dW^1_t. \nonumber
\end{eqnarray}
  The set of admissible policies is standard and consists of investment strategies $(\pi_t, c_t)$ such that
\begin{enumerate}
  \label{intergability}
 	\item $c_t$ belongs to $\mathcal{L}_+$,
	\item $\pi_t$ is $\{\mathcal{F}_t \}$-progressively measurable and $\int^s_t (\pi_\tau)^2 d\tau < \infty$ a.s. for any $t \leq s \leq \tau $,
	\item $L_\tau$, defined by the stochastic differential equation (\ref{eq:L}) and initial conditions $L_t = l > 0, \, H_t = h > 0$ a.e. ($t \leq \tau $).
\end{enumerate}

We claim that one can explicitly average (\ref{eq:sup21}) over $\tau$ and with the certain conditions posed on $\bPhi$ and $U(c)$ the problem (\ref{eq:sup21}) is equivalent to the maximization of
\begin{equation}
	\mathcal{U}(c) := E \left[\int^\infty_0{\bPhi(t)U(c(t))}dt \right]. \label{eq:supinfty}
\end{equation}
\begin{remark}
{\em It is important to note, that if $\tau$ is exponentially distributed we get \emph{precisely} the problem of optimal consumption with random income that was studied in \cite{ZaripDufFlem} and already discussed in introduction.}
\end{remark}
We demonstrate here a formal derivation of the equivalence between two optimal problems briefly mentioned by Merton in \cite{Merton}.
\begin{proposition}\label{proposition}
The problems (\ref{eq:sup21}) and (\ref{eq:supinfty}) are equivalent provided
\bq
	\lim_{T \to \infty}\bPhi(T)E\lb[U(c(T)\rb] = 0. \label{eq:utilitycondition}
\eq
\end{proposition}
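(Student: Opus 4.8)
The plan is to exploit the one structural hypothesis that has not yet been used: the liquidation time $\tau$ is independent of the Brownian motion $W=(W^1,W^2)$, hence of the filtration $\{\mathcal{F}_t\}$, and therefore of every admissible consumption process $c_t$ (which is $\{\mathcal{F}_t\}$‑progressively measurable). First I would write the random horizon as an indicator,
\[
	\int_0^\tau U(c(t))\,dt \;=\; \int_0^\infty \mathbf{1}_{\{t<\tau\}}\,U(c(t))\,dt ,
\]
so that $\mathcal{U}(c)=E\bigl[\int_0^\infty \mathbf{1}_{\{t<\tau\}}U(c(t))\,dt\bigr]$. Next I would interchange the expectation with the $dt$‑integration and, for each fixed $t$, use that $\{t<\tau\}$ has probability $P(\tau>t)=\bPhi(t)$ and is independent of the $\mathcal{F}_t$‑measurable quantity $U(c(t))$; this gives $E[\mathbf{1}_{\{t<\tau\}}U(c(t))]=\bPhi(t)\,E[U(c(t))]$, and a second interchange yields $\mathcal{U}(c)=\int_0^\infty \bPhi(t)\,E[U(c(t))]\,dt=E\bigl[\int_0^\infty \bPhi(t)\,U(c(t))\,dt\bigr]$, which is exactly (\ref{eq:supinfty}). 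Since this identity holds for an arbitrary admissible $c$, the two functionals coincide on the whole admissible set $\mathcal{A}(t,l,h)$ (in the extended‑real‑valued sense), so problems (\ref{eq:sup21}) and (\ref{eq:supinfty}) have the same value function and the same optimizers, i.e.\ they are equivalent.

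The role of hypothesis (\ref{eq:utilitycondition}) is to license these interchanges and, equivalently, to discard the boundary term in the integration‑by‑parts form of the computation that Merton indicated. Concretely, conditioning on $\tau=T$ first, the inner conditional expectation is $g(T):=\int_0^T E[U(c(t))]\,dt$ and $\mathcal{U}(c)=\int_0^\infty \phi(T)\,g(T)\,dT$; since $\phi=-\bPhi'$ and $g(0)=0$, one integration by parts (or, equivalently, Fubini on the region $\{0<t<T\}$) produces $\int_0^\infty \bPhi(t)E[U(c(t))]\,dt$ plus a contribution as $T\to\infty$ that is forced to vanish by a transversality condition of the form (\ref{eq:utilitycondition}), using also that $\bPhi$ is non‑increasing with $\bPhi(\infty)=0$. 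When $U\ge 0$ — for instance the power HARA case $U(c)=c^{\gamma}$ — all of this is automatic by Tonelli and no extra hypothesis is needed; the content of (\ref{eq:utilitycondition}) is for utilities, such as the logarithmic one, that are not sign‑definite and are unbounded near $c=0$.

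The main obstacle is therefore not the algebra of Fubini/integration by parts but the analytic justification of the interchange for a general HARA‑type $U$: one must control the negative part $E[U(c(t))^-]$ along admissible strategies and show that $\int_0^\infty \bPhi(t)\,E[U(c(t))^-]\,dt<\infty$ (or track the value $-\infty$ consistently on both sides), invoking the solvency constraint $L_t\ge 0$, the integrability requirement (\ref{integrability}) on the consumption stream, and the tail decay of $\bPhi$ encoded in (\ref{eq:utilitycondition}). Once that tail estimate is in place, the equivalence follows from the two elementary interchanges above.
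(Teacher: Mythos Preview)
Your proposal is correct, and in fact you sketch two arguments. Your second paragraph---condition on $\tau=T$, write the inner expectation as $g(T)=\int_0^T E[U(c(t))]\,dt$, and integrate $\int_0^\infty \phi(T)g(T)\,dT$ by parts using $\phi=-\bPhi'$, with (\ref{eq:utilitycondition}) killing the boundary contribution at infinity---is precisely the paper's proof. Your primary route in the first paragraph is a genuine alternative: writing $\int_0^\tau=\int_0^\infty \mathbf{1}_{\{t<\tau\}}$ and using the independence of $\tau$ from $\mathcal{F}_t$ to factor $E[\mathbf{1}_{\{t<\tau\}}U(c(t))]=\bPhi(t)E[U(c(t))]$ produces the weight $\bPhi$ directly, with no integration by parts and no boundary term. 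The trade-off you identify is exactly right: the indicator/Fubini route needs an $L^1$ bound on $\bPhi(t)E[U(c(t))^-]$ to justify the interchange, whereas the paper's route needs the pointwise tail condition (\ref{eq:utilitycondition}) to discard the boundary term; for $U\ge 0$ both are automatic by Tonelli, and the logarithmic case is where either hypothesis earns its keep. Your discussion of the analytic obstacle is more careful than the paper's, which simply asserts absolute convergence.
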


\begin{proof}
	We have
	\ben
		E \left[\int_{0}^{\tau} U(c(t)) dt\right] &=& \int_{0}^{\infty} \phi(\tau) \mathcal{E} \lb[\int_{0}^{\tau} U(c(t)) dt\rb] d\tau  \nonumber \\
																	 &=& 	\int_{0}^{\infty} \int_{0}^{\tau} \phi(\tau) g(t) d\tau dt, \label{eq:change}
	\een
	where $g(t) = \mathcal{E} [U(c(t)]$. Here we used $\mathcal{E}$ to indicate that we are averaging over all random variables excluding $\tau$. Because of the absolute convergence
	$$
		\mathcal{E} \lb[\int_{0}^{\tau} U(c(t)) dt\rb] = \int_{0}^{\tau} g(t) dt
	$$
	and integrating (\ref{eq:change}) by parts we get
	\ben \label{generaleqn}
		\int_{0}^{\infty} \int_{0}^{\tau} \phi(\tau) g(t) d\tau dt &=& \bPhi(\tau) \int_0^{\tau} g(t) dt \big|_0^{\infty} \nonumber \\
																															 &+& \int_{0}^{\infty} \bPhi(t) g(t) dt = \mathcal{E} \lb[\int_{0}^{\infty} \bPhi(t) U(c(t))dt\rb],\nonumber
	\een
	where we used the condition  (\ref{eq:utilitycondition}) to eliminate the first term, and the absolute convergence of the integral to move
	the expectation out. $\bullet$
\end{proof}

\begin{remark} {\em It is interesting to note that although the investor receives additional wealth from the liquidation of the illiquid asset, the expected value of this lump payment is just a constant and does not affect the strategy for maximizing (\ref{eq:sup}).
Indeed, if we look for a supremum of $E \left[\int_{0}^{\tau} U(c(t)) + U(W^{\pi,c}_{\tau})dt\right]$ instead to a supremum of $E \left[\int_{0}^{\tau} U(c(t)) dt\right]$, then after the calculations similar to the ones in the proof of Proposition \ref{proposition} we obtain
$$E \left[\int_{0}^{\tau} U(c, t) + U(W_{\tau})dt\right]=\mathcal{E} \lb[\int_{0}^{\infty} \bPhi(t) U(c(t))dt\rb]+\int_{0}^{\infty} \phi(\tau) U(W_{\tau})dt.$$
The second term turns out to be a constant and can not change the optimal strategy.}
\end{remark}
From now on in this paper we will work with the problem (\ref{eq:sup21}) with random liquidation time $\tau$ that has a distribution satisfying the condition (\ref{eq:utilitycondition}) in Proposition \ref{proposition} and, therefore, corresponds to the \emph{value function} $V(t, l, h)$ which is defined as
\begin{equation} \label{valueFun}
	V(t, l, h) = \max_{(\pi, c) } E \left[ \int_t^\infty \bPhi(t)U(c(t)) dt | L(t) = l, H(t) = h \right].
\end{equation}

{\em Schwartz and Tebaldi} in \cite{tebaldi} write the HJB equation for the value function in terms of $l$ and $h$. We use the same notation and derive a more general  HJB equation that is the main equation in this paper
\begin{eqnarray}
V_t (t, l, h) &+& \frac{1}{2}\eta^2h^2V_{hh} (t, l, h) + (rl + h)V_l (t, l, h) \nonumber \\
  &+& (\mu - \delta) hV_h (t, l, h) + \max_{\pi} G[\pi] + \max_{c \geq 0} H[c] = 0, \label{eq:HJB21}\\
G[\pi] &=& \frac{1}{2}V_{ll}(t, l, h)\pi^2 \sigma^2 + V_{lh}(t, l, h)\eta\rho\pi\sigma h \nonumber \\
      &+& \pi(\alpha - r)V_l(t, l, h), \label{eq:Gmax21} \\
H[c] &=& -cV_l (t, l, h) + \bPhi(t) U(c), \label{eq:Hmax21}
\end{eqnarray}
with the boundary condition
\be
V(t, l, h) \to 0, \text{ as } t \to \infty.
\ee
\begin{remark} \label{remark}
{\em Merton formulates the problem in terms of the total wealth $W = l + h$. Of course, one can obtain an HJB equation in terms of $W$ and $h$. This equation looks as follows
\begin{eqnarray}
V_t (t, W, h) &+& \frac{1}{2}\eta^2h^2V_{hh} (t, W, h) + (rW + (\mu - r) h)V_W (t, W, h)  +  \nonumber \\
  &+&\eta^2h^2 V_{W h} (t, W, h) + \eta^2h^2 V_{W h} (t, W, h) + (\mu - \delta) hV_h (t, W, h) + \nonumber\\
  &+& \max_{\pi} G[\pi] + \max_{c \geq 0} H[c] = 0,\nonumber \\
G[\pi] &=& \frac{1}{2}V_{WW}(t, W, h)\sigma^2 \pi^2 + \left(V_{Wh}(t, W, h) + V_{WW}(t, W, h)\right)\eta\rho\sigma h \pi \nonumber \\
      &+&  +  \pi(\alpha - r)V_W(t, W, h), \nonumber \\
H[c] &=& -cV_W(t, W, h) + \bPhi(t) U(c) \nonumber
\end{eqnarray}
and becomes (\ref{eq:HJB21}) if one changes the variables $\left(W, h\right) \rightarrow \left(l, h\right)$. We will work with the equation (\ref{eq:HJB21}) because it corresponds to a number of works mentioned in introduction and because it is shorter and simpler.}
\end{remark}

\subsection{Viscosity solution of the problem. Comparison Principle}

\begin{definition} \label{defi:viscosity}
	A continuous function $u: \Omega \to \mathbb{R}$ is a viscosity solution of (\ref{eq:HJB21}) if both conditions are satisfied
	\begin{itemize}
		\item
		$u$ is a \emph{viscosity subsolution}, i.e. for any $\phi \in C^2(\bar{\Omega})$ and any local maximum point $z_0 \in \bar{\Omega}$ of $u - \phi$ holds,
		$F(z_0, u(z_0), D\phi(z_0), D^2\phi(z_0)) \leq 0$
		
		\item
		$u$ is a \emph{viscosity supersolution}, i.e. for any $\phi \in C^2(\bar{\Omega})$ and any local minimum point $z_0 \in \bar{\Omega}$ of $u - \phi$ holds,
		$F(z_0, u(z_0), D\phi(z_0), D^2\phi(z_0)) \geq 0$
		
	\end{itemize}
\end{definition}
The fact that the value function for a problem of such kind is a viscosity solution is well known (see e.g. \cite{userguide}) and generally holds if the control and state variables are uniformly bounded. However, this is not the case for the optimal consumption problem and thus a more sophisticated proof is needed.   This area was profoundly studied in \cite{ZaripDuf}, \cite{ZaripDufFlem}, \cite{Zarip1999}. Though our case is similar to the one considered in \cite{ZaripDuf}, the main difference comes from the non-exponential time discounting we are using in the utility functional (\ref{valueFun}). As we mentioned before, this leads to the HJB equation (\ref{eq:HJB21}) being three dimensional. This demands additional work. We will concentrate on the new results and will omit the details of the arguments that work in our problem and could be found in \cite{ZaripDuf}.

\begin{theorem}\label{MT}
There exists a unique viscosity solution of the corresponding HJB equation (\ref{valueFun}) if
\begin{enumerate}
\item $U(c)$ is strictly increasing, concave and twice differentiable in $c$,
\item $\lim_{T \to \infty} \overline{\Phi} (T) E[U(c(T))]=0$, $\overline{\Phi} (T) \sim e^{-\kappa T}$ or faster as $T \to \infty$,
\item $U(c) \leq M(1+c)^{\gamma}$ with $0<\gamma<1$ and $M>0$,
\item $\lim_{c \to 0} U'(c) = + \infty$,  $lim_{c \to + \infty} U'(c) = + \infty$.
\end{enumerate}
\end{theorem}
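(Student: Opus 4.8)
The plan is to follow the programme of Duffie and Zariphopoulou \cite{ZaripDuf,ZaripDufFlem,Zarip1999}, adapted to the fact that the non-exponential discounting makes \eqref{eq:HJB21} genuinely time-dependent and hence three-dimensional, so that the reduction to a stationary elliptic problem used in the exponential case is unavailable. I would split the argument into three parts: (i) a priori estimates showing that the value function \eqref{valueFun} is finite, continuous, of controlled growth, and satisfies the boundary condition at $t\to\infty$; (ii) verification, via the dynamic programming principle, that $V$ is a viscosity solution of \eqref{eq:HJB21} in the sense of Definition \ref{defi:viscosity}; (iii) a comparison principle for sub- and supersolutions in the appropriate growth class, which yields uniqueness and, applied to $V$ against itself, identifies $V$ as \emph{the} solution.

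For part (i): $H_t$ is a geometric Brownian motion, so $E[H_t^{\gamma}]\le h^{\gamma}e^{C t}$ for a constant $C=C(\mu,\delta,\eta,\gamma)$, and dominating $L_t$ from above by the strategy $c\equiv 0$ together with $\gamma<1$ and Jensen's inequality bounds $E[(c(t))^{\gamma}]$ along any admissible strategy by a polynomial in $l,h$ times $e^{Ct}$. Combining this with assumption (3), $U(c)\le M(1+c)^{\gamma}$, and the decay $\bPhi(t)\sim e^{-\kappa t}$ (assumption (2)), the integral in \eqref{valueFun} converges and one obtains $0\le V(t,l,h)\le K\,\bPhi(t)(1+l+h)^{\gamma}$; the first part of assumption (2), $\lim_{T\to\infty}\bPhi(T)E[U(c(T))]=0$, is exactly the hypothesis of Proposition \ref{proposition} and also gives $V(t,l,h)\to 0$ as $t\to\infty$. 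Assumptions (1) and (4) guarantee the pointwise maxima in \eqref{eq:Gmax21}--\eqref{eq:Hmax21} are attained on the range of test functions: concavity with $\phi_{ll}<0$ makes $G[\pi]$ coercive in $\pi$, while the Inada-type condition $\lim_{c\to 0}U'(c)=+\infty$ together with the growth bound on $U$ makes $H[c]$ attain an interior maximum at $c^{\star}=(U')^{-1}(V_l/\bPhi(t))$.

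For part (ii): the dynamic programming principle for \eqref{eq:sup21} holds as in \cite{ZaripDuf}, and the standard test-function argument then shows that $V$ is simultaneously a viscosity sub- and supersolution of \eqref{eq:HJB21}. The only departures from \cite{ZaripDuf} are the explicit $t$-dependence through $\bPhi$, handled by carrying $V_t$ through the estimates and using that $\bPhi$ is $C^1$ and monotone, and the extra spatial dimension; neither changes the structure of the proof, so I would state this and refer to \cite{ZaripDuf,ZaripDufFlem} for the routine parts, together with the usual remark that the unboundedness of the control $\pi$ effectively restricts the admissible test functions at interior maxima to those for which $\max_{\pi}G[\pi]<\infty$.

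The main obstacle is part (iii), the comparison principle, and this is where the real work lies. Let $u$ be a subsolution and $w$ a supersolution, both continuous, nonnegative, of growth $O(\bPhi(t)(1+l+h)^{\gamma})$, and vanishing as $t\to\infty$; the claim is $u\le w$ on $\bar\Omega$. I would argue by contradiction, assuming $\sup_{\bar\Omega}(u-w)=\theta>0$. Because of the unbounded $(l,h)$-domain one first localizes: the decay in $t$ confines a near-maximizing sequence to a bounded strip in $t$, and subtracting a barrier $\varepsilon(1+l+h)^{\gamma'}$ with $\gamma<\gamma'<1$ (chosen so that it also controls the linear-growth drift terms $(rl+h)V_l+(\mu-\delta)hV_h$) forces the maximum of the penalized function into a compact set, at the cost of a term vanishing as $\varepsilon\to 0$. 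Then doubling the $(l,h)$ variables with the quadratic penalty $|(l,h)-(l',h')|^2/(2\delta)$ and applying the Crandall--Ishii lemma \cite{userguide} produces $X\ge Y$, and it remains to estimate the difference of the Hamiltonians: since $F$ is a supremum over $(\pi,c)$ of operators that are linear in the $2$-jet, it suffices that this family satisfy a uniform structure condition. The principal part, after $\sup_{\pi}$, is degenerate elliptic with coefficients locally Lipschitz in $(l,h)$, so $\mathrm{tr}(aX-aY)$ is controlled by $|(l,h)-(l',h')|^2/\delta\to 0$; the $c$-Hamiltonian, being the Legendre transform of $\bPhi(t)U$, is convex in the gradient and locally Lipschitz in $(t,l,h,p)$ on $\{V_l>0\}$ (nonempty because $U$ is strictly increasing), and is handled by the same estimate. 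Letting $\delta\to 0$ and then $\varepsilon\to 0$ contradicts $\theta>0$. The two genuine technical difficulties I anticipate are (a) choosing the barrier to be simultaneously compatible with the polynomial growth of $u,w$ and with the at-most-linear growth of the drift, so that the penalized function still has an interior maximum; and (b) controlling the $c$-Hamiltonian near $\{l=0\}$, where $V_l$ may blow up — here the explicit form $c^{\star}=(U')^{-1}(V_l/\bPhi)$ and assumption (4) keep the optimal consumption bounded on compacts away from $l=0$, and a separate barrier in a neighborhood of $\{l=0\}$ completes the estimate. Once comparison holds, uniqueness is immediate, and applying it to $V$, which by (i)--(ii) lies in the required class and is both a sub- and a supersolution, identifies $V$ as the unique viscosity solution.
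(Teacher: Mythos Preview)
Your three-part programme (a priori growth and regularity, DPP $\Rightarrow$ viscosity solution, comparison principle) is exactly the paper's structure, and both are adaptations of Duffie--Zariphopoulou. Two points, however, deserve comment.

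First, your growth estimate in (i) is underspecified: you cannot simultaneously ``dominate $L_t$ by the strategy $c\equiv 0$'' and ``bound $E[c(t)^{\gamma}]$ along any admissible strategy''---these pull in opposite directions. The paper instead uses the \emph{initial wealth equivalent} argument of \cite{HuangPages,ZaripDuf}: embed the problem in a fictitious complete market with a synthetic asset chosen so that the income stream $\delta H_t$ is replicable at cost at most $C_1 h$, and bound $V$ above by the closed-form Merton solution with initial wealth $l+C_1 h$. This gives $V(t,l,h)=O(|l|^{\gamma}+|h|^{\gamma})$ uniformly in $t$. The paper also proves, as a separate lemma, that $V$ is \emph{concave} in $(l,h)$; you do not establish this, and it is essential for what follows.

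Second, and more seriously, your comparison argument has a gap. The unbounded control $\pi$ makes the post-maximization term $\max_{\pi}G[\pi]=-\tfrac12(\cdots)^2/\phi_{ll}$ equal to $+\infty$ whenever the second-order jet fails to be strictly negative in the $l$-direction. The standard quadratic doubling plus the Crandall--Ishii lemma yields matrices $X\le Y$ but gives no sign control on $X$ individually, so the Hamiltonian difference is not a priori finite and your ``$\mathrm{tr}(aX-aY)\to 0$'' step cannot be justified. The paper, following \cite{ZaripDuf,Zarip1999}, resolves this by building concavity of the subsolution into the \emph{hypotheses} of the comparison theorem (hence the need for the concavity lemma above) and by using, instead of a quadratic penalty, the quartic-plus-power penalty
\[
\phi(t,x,y)=\Bigl|\tfrac{y-x}{\xi}-4\varpi\Bigr|^{4}+\theta(l_x+h_x)^{\lambda}+\bar m(T_n-t),\qquad \lambda\in(\gamma,1),
\]
together with a time penalty that exploits $V\to 0$ as $t\to\infty$. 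Concavity forces the relevant second-order jets to be strictly negative, so $G^{\star}$ is finite, and the quartic/power terms are tuned to absorb the resulting singular contribution. Your remark about ``restricting admissible test functions'' is adequate for part (ii), where the test function is at your disposal, but not for comparison, where it is dictated by the doubling.
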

The proof of this statement is to be done in three steps. At first we need to establish certain properties of the value-function $V(t, l, h)$ that corresponds to our problem. This properties are formulated and proved in Lemma \ref{PVF} that follows. Then we show that the value function with such properties is a viscosity solution of the problem, this is done in Lemma \ref{lm:viscosity}. The uniqueness of this solution follows from the {\em comparison principle} that is actually a very useful tool by itself and is formulated and proved in Theorem \ref{thm:Comparison}. This reasoning is very close to the proof of Theorem 4.1 from \cite{ZaripDuf}.

\begin{lemma}\label{PVF}
Under the conditions $(1) - (4)$ from Theorem \ref{MT} the value function $V(t, l, h)$ (\ref{valueFun}) has the following properties:
\begin{enumerate}
\item[\textup{(i)}] $V(t, l, h)$ is concave and non-decreasing in $l$ and in $h$,
\item[\textup{(ii)}] $V(t, l, h)$ is strictly increasing in $l$,
\item[\textup{(iii)}] $V(t, l, h)$ is strictly decreasing in $t$ starting from some point,
\item[\textup{(iv)}] $0 \leq V(t, l, h) \leq O(|l|^{\gamma} + |h|^{\gamma})$ uniformly in $t$.
\end{enumerate}
\end{lemma}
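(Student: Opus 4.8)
The plan is to read all four properties off the linear--convex structure of the control problem rather than off the HJB equation (\ref{eq:HJB21}) itself: the wealth dynamics (\ref{eq:L}) are affine in the quadruple (initial data $(l,h)$, controls $(\pi,c)$), the running reward $\bPhi(t)U(c(t))$ is concave and strictly increasing in $c$, the market coefficients $r,\alpha,\sigma,\mu,\delta,\eta$ are constant so the dynamics are time-homogeneous, and $\bPhi$ is non-increasing with $\bPhi(T)=O(e^{-\kappa T})$. Logically I would first establish the upper bound in (iv), since without finiteness of the supremum in (\ref{valueFun}) the remaining statements are empty; then derive (i), and finally (ii) and (iii), which are perturbation arguments around an almost-optimal strategy.

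For the upper bound in (iv) I would combine condition (3), the decay $\bPhi(s)\le Ce^{-\kappa s}$ from condition (2), and Jensen's inequality applied twice --- once with the probability measure $\bPhi(s)\,ds/\|\bPhi\|_1$ to pull the concave map $x\mapsto x^{\gamma}$ out of the weighted integral, once to pull it out of the expectation --- thereby reducing everything to an a priori estimate of the form $E\big[\int_t^{\infty}\bPhi(s)\,c(s)\,ds\big]\le K\,(1+l+h)$ valid for every admissible policy. That estimate is the usual ``discounted consumption does not exceed discounted wealth'' inequality, obtained from the solvency constraint $L_\cdot\ge 0$ in (\ref{eq:L}) by integrating and localizing to absorb the unbounded $\pi$; it is precisely here that the size of $\kappa$ relative to $r,\alpha,\mu,\sigma,\eta$ is needed. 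The lower bound $V\ge 0$ is immediate: null consumption already gives $0$ for the power utilities, while in the logarithmic limit a feasible constant positive consumption rate sustained by the initial wealth $l>0$ and the dividend flow $\delta H$ gives a non-negative value.

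For (i), concavity follows from a standard mixing argument: given $\lambda\in(0,1)$ and two states, pick $\varepsilon$-optimal admissible pairs $(\pi_i,c_i)$, $i=0,1$; affineness of (\ref{eq:L}) makes $(\pi_\lambda,c_\lambda)=\lambda(\pi_1,c_1)+(1-\lambda)(\pi_0,c_0)$ admissible for the convex combination of the initial states because the corresponding wealth process is $\lambda L^1_\cdot+(1-\lambda)L^0_\cdot\ge 0$, and concavity of $U$ gives $\mathcal U(c_\lambda)\ge\lambda\mathcal U(c_1)+(1-\lambda)\mathcal U(c_0)$; letting $\varepsilon\to0$ yields joint concavity, hence concavity in each of $l$ and $h$. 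Monotonicity is even simpler: any pair admissible for $(t,l,h)$ remains admissible for $(t,l',h')$ with $l'\ge l$, $h'\ge h$ (extra liquid wealth and extra dividends only relax solvency) and produces the same consumption, hence the same payoff, so $V(t,l',h')\ge V(t,l,h)$.

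For (ii), given an almost-optimal pair $(\pi^*,c^*)$ for $(t,l,h)$ I would, at the state $(t,l+\vartheta,h)$, set aside $\vartheta$ in a separate riskless account and consume only its interest, i.e. follow $(\pi^*,c^*+r\vartheta)$; this is admissible and, since $U$ is strictly increasing and $\bPhi>0$ at every finite time, strictly improves the payoff, the improvement being bounded below uniformly over almost-optimal strategies by strict concavity of $U$ together with the a priori control on consumption from (iv). For (iii), a time-shift comparison --- push an almost-optimal pair for $(t_2,l,h)$ back by $t_2-t_1$ using an independent copy of $W$, which is admissible for $(t_1,l,h)$ by time-homogeneity, and substitute $u=s+(t_2-t_1)$ so that the comparison reduces to $\bPhi(u-(t_2-t_1))\ge\bPhi(u)$ --- shows $V(\cdot,l,h)$ is non-increasing, and strictly decreasing on any interval where $\bPhi$ is strictly decreasing, which is the content of ``starting from some point'' (it accommodates a possible initial flat segment of $\bPhi$). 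The main obstacle is the a priori estimate behind (iv) and (ii): in \cite{ZaripDuf} the discount is a pure exponential, whereas here $\bPhi$ is a general Weibull-type survival function and the value function carries the extra variable $t$, so one must check by hand that $\bPhi(s)\le Ce^{-\kappa s}$ with the chosen $\kappa$ still keeps discounted consumption and discounted wealth integrable with the growth $O(|l|^{\gamma}+|h|^{\gamma})$; the remaining arguments are the ones ``that work in our problem and could be found in \cite{ZaripDuf}''.
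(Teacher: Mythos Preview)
Your treatments of (i) and (iii) coincide with the paper's and are fine. The substantive gap is in (iv), and it contaminates (ii).

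The estimate $E\big[\int_t^\infty \bPhi(s)\,c(s)\,ds\big]\le K(1+l+h)$, uniformly over admissible policies, is \emph{false}. Localizing only kills the martingale part $\int e^{-\kappa s}\pi_s\sigma\,dW^1_s$; the drift term $\int e^{-\kappa s}\pi_s(\alpha-r)\,ds$ survives, and since $\pi$ is unbounded no deterministic discount controls it. Concretely, set $h=0$ and take the feedback policy $\pi_s=aL_s$, $c_s=bL_s$ with $b=r+a(\alpha-r)-\kappa+1$. Then $L_s$ is a positive geometric Brownian motion, the policy is admissible, and a direct computation gives $E\big[\int_t^\infty e^{-\kappa s}c_s\,ds\big]=b\,l\,e^{-\kappa t}$, which blows up as $a\to\infty$ regardless of $\kappa$. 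So the double-Jensen reduction to a first-moment bound cannot close. What the paper does instead, following \cite{ZaripDuf}, \cite{HuangPages}, \cite{KaratzasShreve}, is to \emph{complete the market} with a fictitious asset $S'$ driven by $W^2$: on the enlarged market the income stream $\delta H_t$ becomes replicable from an initial endowment $f(h)\le C_1 h$ (finiteness of $f(h)$ is where the size of $\kappa$ actually enters). Every admissible strategy for the original problem with data $(l,h)$ is then dominated by one for the classical Merton problem with initial wealth $l+C_1h$; combining $\bPhi(s)\le Ce^{-\kappa s}$ with $U(c)\le M(1+c)^\gamma$ and Merton's closed form yields $V(t,l,h)\le A(l+C_1h)^\gamma=O(|l|^\gamma+|h|^\gamma)$ uniformly in $t$.

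Your argument for (ii) explicitly invokes ``the a priori control on consumption from (iv)'' to force the improvement $E\big[\int\bPhi(s)\big(U(c^\star+r\vartheta)-U(c^\star)\big)\,ds\big]$ to be bounded away from zero along $\varepsilon$-optimal strategies; since that control is unavailable, the direct construction does not yield a strict inequality. The paper sidesteps this by contradiction: if $V(t,\cdot,h)$ is not strictly increasing, concavity forces it to be constant on a half-line $[l_0,\infty)$, but the crude strategy $(\pi,c)=(0,rl_1)$ at a sufficiently large $l_1$ already produces $K(t)\,U(rl_1)>V(t,l_0,h)$, which is impossible. This route needs no integrability estimate on $c$ whatsoever.
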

The proof of the properties $(i) - (ii)$, Lemma \ref{PVF}, can be found in \cite{Zarip1992} and could be applied to our problem with very slight modifications yet we place them here for the consistency of the narrative.
\begin{enumerate}
\item \begin{proof} Let us look on the points $(l_1, h_1)$ and $(l_2, h_2)$ with corresponding $(\pi^{\epsilon}_1, c^{\epsilon}_1)$ and $(\pi^{\epsilon}_2, c^{\epsilon}_2)$ which are $\epsilon$-optimal controls in each of this points respectively or in another words:
$$V(t, l_1, h_1) \leq E \left[ \int^{+\infty}_{t} \bPhi(\tau) U(c^{\epsilon}_1)d\tau \right] + \epsilon,$$
$$V(t, l_2, h_2) \leq E \left[ \int^{+\infty}_{t} \bPhi(\tau) U(c^{\epsilon}_2)d\tau \right] + \epsilon.$$
We choose the point $(\alpha c^{\epsilon}_1 + (1-\alpha) c^{\epsilon}_2)$, where $\alpha \in \mathbb{R}$ and $0 < \alpha < 1$. The policy $(\alpha l_1 + (1-\alpha) l_2, \alpha h_1 + (1-\alpha) h_2)$ is admissible for this point
\begin{eqnarray}
&& V(t, \alpha l_1 + (1-\alpha) l_2, \alpha h_1 + (1-\alpha) h_2) \geqslant \\ \nonumber
&\geqslant& E \left[ \int^{+\infty}_{t} \bPhi(\tau) U(\alpha c^{\epsilon}_1 + (1-\alpha)  c^{\epsilon}_2)d\tau \right]. \label{Ineq1}
\end{eqnarray}
The utility function is concave (see condition $1.$ from Theorem \ref{MT}), so we can write
\begin{eqnarray}
&& E \left[ \int^{+\infty}_{t} \bPhi(\tau) U(\alpha c^{\epsilon}_1 + (1-\alpha)  c^{\epsilon}_2)d\tau \right] \geqslant \\ \nonumber
&\geqslant& \alpha E \left[ \int^{+\infty}_{t} \bPhi(\tau) U( c^{\epsilon}_1) d\tau \right] + (1-\alpha) E \left[ \int^{+\infty}_{t} \bPhi(\tau) U(c^{\epsilon}_2) d\tau \right] \geqslant \\ \nonumber
&\geqslant& \alpha V(t, l_1, h_1) + (1-\alpha) V(t, l_2, h_2) + 2\epsilon.
\end{eqnarray}
Now that we have proved the concavity of $V(t, l, h)$ in $l$ and $h$. We can show that it is not decreasing. Without any loss of generality we can assume that $l_1 \leqslant l_2$ and $h_1 \leqslant h_2$. Note that if $(\pi^{\epsilon}_1, c^{\epsilon}_1)$ is $\epsilon$-optimal for $(l_1, h_1)$ it is admissible for $(l_2, h_2)$ which means that
$$V(t, l_1, h_1) \leqslant V(t, l_2, h_2) + \epsilon,$$
setting $\epsilon \rightarrow 0$ we get that $V(t, l, h)$ is non-decreasing in first two variables. $\bullet$
\end{proof}
\item \begin{proof} To show that $V(t, l, h)$ is strictly increasing in $l$ we can assume the contrary. Let us look at $l_1 < l_2$ such that $V(t, l_1, h) = V(t, l_2, h)$. Since we already know that $V(t, l, h)$ is non-decreasing in $l$ the function $V$ should be constant on the interval $[l_1, l_2]$, moreover, since $V$ is concave in $l$ this interval has to be infinite. This means that there is such $l_0$ that $V(t, l, h) = V(t, l_0, h)$ for any $l \geqslant l_0$. Let $(\pi^{\epsilon}, c^{\epsilon})$ be $\epsilon$-optimal for $(t, l_0, h)$
\begin{equation}\label{constint}
V(t, l_0, h) \leq E \left[ \int^{+\infty}_{t} \bPhi(\tau) U(c^{\epsilon})d\tau \right] + \epsilon.
\end{equation}
We denote $\int^{+\infty}_{t} \bPhi(\tau) d\tau$ as $K(t)$ and look on the inequality
$$l_1 > \max \left(l_0, U^{-1}\left[ 1/K(t) \left( E \left[ \int^{+\infty}_{t} \bPhi(\tau) U(c^{\epsilon}) d\tau \right] + \epsilon \right) \right]/r \right),$$
where $U^{-1}$ denotes an inverse utility function.
The strategy $\pi = 0$ and $c = r l_1$ does not depend on time but is admissible for $(t, l_1, h)$. Indeed, due to the fact that the strategy $(0, rl_1)$ does not depend on time one can write
$$K(t)U(r l_1) = E \left[  \int^{+\infty}_{t} \bPhi(\tau) U(r l_1) d\tau \right] \leqslant V(t, l_1, h).$$
But if we look at $K(t) U(rl_1)$ and use the formula for $l_1$ given above  we get
$$K(t) U(rl_1) > E \left[  \int^{+\infty}_{t} \bPhi(\tau) U(c^{\epsilon})d\tau \right] + \epsilon,$$
which is greater or equal to  $V(T, l_0, h)$ according to the Equation (\ref{constint}). That gives us $V(t, l_0, h) < V(t, l_1, h)$ which is a contradiction keeping in mind that $l_1 > l_0$. So, $V$ is strictly increasing in $l$. $\bullet$
\end{proof}
\item \begin{proof} According to condition $2$ from Theorem \ref{MT} the product of $\bPhi(t)$ and $U(c(t))$ as well as $\bPhi(t)$ itself should be both decreasing for $t > \tau$ starting from a large enough $\tau$. So we choose two moments of time $t_1$ and $t_2$ such that $\tau < t_1 < t_2$, $\Delta t = t_2-t_1$ and look at $V(t_2, l, h)$ then
	\be
		V(t_2, l, h) = \int_{t_2}^\infty \bPhi(t) U(c_t) dt &\stackrel{\tau = t - \Delta t}{=}& \int_{t_1}^\infty \bPhi(\tau + \Delta t) U(c_{\tau + \Delta t}) d\tau \\
	\ee
since $\bPhi(t)$ is decreasing for every $t > t_1$ and the process $c_{\tau + \Delta t}$ for $\tau \geq t_1$ with $L(t_2) = l, H(t_2) = h$ has exactly the same realisations as $c_\tau$ for $\tau \geq t_1$ with $L(t_1) = l, H(t_1) = h$ one can write
	\be
		\int_{t_1}^\infty \bPhi(\tau + \Delta t) U(c_{\tau + \Delta t}) d\tau & <& \int_{t_1}^\infty \bPhi(\tau) U(c_{\tau + \Delta t}) d\tau \leqslant V( t_1, l, h).
	\ee
So for any $t_1$ and $t_2$ such that $\tau < t_1 < t_2$ we get $V(t_1, l, h) > V(t_2, l, h)$. $\bullet$
\end{proof}
\item \begin{proof} The strategy for proving the upper bound is based on the ideas of Huang, Pag\`es \cite{HuangPages} and Duffie, Zariphopolou \cite{ZaripDuf}. Instead of the original problem with the non-traded income generated by $H_t, H_0 = h$ one can consider a fiction consumption-investment problem with a special asset on the market, such that has a sufficient initial endowment (meaning that one can generate exactly the same income flow as $H_t$ would by investing in the market). Suppose the synthetic asset follows geometrical Brownian motion
\begin{equation}
	dS'_t = \alpha'S'_t + \sigma' S'_tdW_t, \quad t \geq 0 ~~ S'_0 = s', \quad s' > 0,
\end{equation}
with constants $\alpha'$ and $\sigma'$ to be defined later. Next, the \emph{initial wealth equivalent} of the stochastic income is defined by
\begin{equation*}
	f(h) = \delta E_h\left[ \int^{\infty}_0 e^{-\kappa t}\xi_t H_t dt \right],
\end{equation*}
where
$$
	\xi_t = \exp\left(-\frac{1}{2}(\theta^2_1 + \theta^2_2) + \theta_1W^{1}_t + \theta_2W^{2}_t\right),
$$
$\theta_1 = (\alpha-r)/\sigma$ and $\theta_2 = (\alpha' - r)/\sigma'$.

 It turns out that with the properly chosen $\alpha'$ and $\sigma'$ we achieve that $f(h) < C_1 h$. Moreover, the stochastic income rate $H_t$ can be replicated by a self-financing strategy on the complete market $(B_t, S_t, S'_t)$ with the additional initial endowment $f(h)$. This fact is well known from the martingale-based studies of the consumption-investment problem, primarily carried out in \cite{HuangPages} and \cite{KaratzasShreve}.

To finish the proof, we notice that since the stochastic income can be replicated, any admissible strategy for the original problem with initial conditions $(l, h)$ is dominated by a strategy on the synthetic market with initial endowment $l + f(h) < l + C_1h$. On the other hand, we have the growth conditions for $\bPhi(t)$ and $U(c)$. So, the maximal utility is bounded from above by the solution of the classic investment-consumption problem with initial wealth $l + C_1h$, HARA utility and exponential discounting. Due to Merton we have a closed form solution for this case. Putting everything together, we obtain the desired bound (all the further details can be found in \cite{ZaripDuf}). $\bullet$
\end{proof}
\end{enumerate}

 Now we can prove the existence of the viscosity solution of the problem  (\ref{eq:HJB21}).
\begin{lemma} \label{lm:viscosity}
	Under the conditions of Lemma \ref{PVF} the function $V(t, l, h)$ is a viscosity solution of (\ref{eq:HJB21}) on the domain $D = (0, \infty)
	\times (0, \infty) \times (0, \infty)$.
\end{lemma}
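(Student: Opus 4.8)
The plan is to run the standard two-sided viscosity argument: show that $V$ is simultaneously a viscosity subsolution and a viscosity supersolution of (\ref{eq:HJB21}) on $D$, following the template of the proof of Theorem~4.1 in \cite{ZaripDuf} and adapting it to the extra time variable. Two preliminaries are needed. First, continuity of $V$ on the open set $D$: Lemma~\ref{PVF}(i) gives concavity, hence local Lipschitz continuity, of $(l,h)\mapsto V(t,l,h)$, and this together with the monotonicity in $t$ of Lemma~\ref{PVF}(iii), the growth bound of Lemma~\ref{PVF}(iv), and a routine estimate on the controlled flow $(L_s,H_s)$ over a short horizon upgrades to joint continuity in $(t,l,h)$, so that Definition~\ref{defi:viscosity} applies. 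Second, the dynamic programming principle: for every $(t,l,h)\in D$ and every small $\theta>0$,
\[
 V(t,l,h)=\sup_{(\pi,c)\in\mathcal{A}(t,l,h)}E\!\left[\int_t^{t+\theta}\!\bPhi(s)\,U(c_s)\,ds+V\big(t+\theta,L_{t+\theta},H_{t+\theta}\big)\right],
\]
which in our unbounded-control setting requires the measurable-selection and localization (exit from a neighbourhood of the starting point) arguments detailed in \cite{ZaripDuf}; I would state it and refer there.

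For the subsolution inequality, fix $\varphi\in C^2(\bar D)$ and a local maximum point $z_0=(t_0,l_0,h_0)$ of $V-\varphi$, normalised so that $V(z_0)=\varphi(z_0)$ and $V\le\varphi$ near $z_0$. Exploiting the \emph{supremum} in the DPP, for each small $\theta$ pick a $\delta\theta$-optimal control $(\pi^\theta,c^\theta)$; apply It\^o's formula to $s\mapsto\varphi(s,L_s,H_s)$ on $[t_0,t_0+\theta]$ (stopped at the first exit from the neighbourhood), take expectations so the stochastic integral drops, insert $V\le\varphi$ at the terminal time, cancel $\varphi(z_0)$, divide by $\theta$, and let $\theta\downarrow0$ and then $\delta\downarrow0$. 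Since pointwise $\bPhi(s)U(c)+\mathcal{L}^{\pi,c}\varphi$ is dominated by $\tfrac12\eta^2h^2\varphi_{hh}+(rl+h)\varphi_l+(\mu-\delta)h\varphi_h+\max_\pi G[\pi]+\max_{c\ge0}H[c]$ (with $\varphi$ in place of $V$ in $G$ and $H$), the passage to the limit yields
\[
 \varphi_t(z_0)+\tfrac12\eta^2h_0^2\varphi_{hh}(z_0)+(rl_0+h_0)\varphi_l(z_0)+(\mu-\delta)h_0\varphi_h(z_0)+\max_\pi G[\pi]+\max_{c\ge0}H[c]\ge 0,
\]
i.e. the subsolution condition of Definition~\ref{defi:viscosity}.

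For the supersolution inequality, take a local minimum point $z_0$ of $V-\varphi$ with $V(z_0)=\varphi(z_0)$ and $V\ge\varphi$ near $z_0$, and feed the DPP an \emph{arbitrary constant} control $\pi_s\equiv p\in\mathbb{R}$, $c_s\equiv q\ge0$. The same It\^o computation---now with $V\ge\varphi$ at the terminal time and no optimality invoked, so the chain of inequalities runs the other way---gives, after dividing by $\theta$ and sending $\theta\downarrow0$, that $\varphi_t+\tfrac12\eta^2h^2\varphi_{hh}+(rl+h)\varphi_l+(\mu-\delta)h\varphi_h+G[p]+H[q]\le 0$ at $z_0$ for every $(p,q)$; taking the supremum over $p\in\mathbb{R}$ and over $q\ge0$ gives the supersolution inequality. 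Finiteness of $\max_{c\ge0}H[c]$ is here automatic: were $\varphi_l(z_0)\le0$, the supremum over $q$ would force the left-hand side to $+\infty$, contradicting the inequality just derived, so $\varphi_l(z_0)>0$; then $-q\varphi_l(z_0)+\bPhi(t_0)U(q)\to-\infty$ as $q\to\infty$ because $\gamma<1$, with maximiser the unique $c^\ast$ solving $U'(c^\ast)=\varphi_l(z_0)/\bPhi(t_0)$ (which exists since $U$ is strictly concave with $\lim_{c\to0}U'(c)=+\infty$).

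The step I expect to be the real obstacle is not any of the displayed manipulations, which are routine, but legitimising the whole scheme \emph{despite the unboundedness of the consumption control and of the state space}---precisely where the textbook argument for bounded problems fails. One must justify the DPP for the merely locally integrable, progressively measurable consumption processes of (\ref{integrability}); produce $\delta\theta$-optimal controls whose trajectories do not escape to the boundary of $D$ or to infinity as $\theta\downarrow0$; and pass to the limit under the integral sign. As in \cite{ZaripDuf}, the a~priori bound $0\le V\le O(|l|^\gamma+|h|^\gamma)$ of Lemma~\ref{PVF}(iv)---itself obtained by domination via the explicitly solvable Merton problem with initial wealth $l+C_1h$---together with the growth hypotheses on $\bPhi$ and $U$ is what keeps these estimates under control; the only essentially new bookkeeping relative to \cite{ZaripDuf} is the continuous weight $\bPhi(t)$, which enters harmlessly because $\bPhi(t_0+\theta)\to\bPhi(t_0)$.
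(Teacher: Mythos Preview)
Your proposal is correct and follows essentially the same route as the paper: the supersolution half is obtained exactly as in the paper by feeding a constant admissible control $(\pi_0,c_0)$ into the dynamic programming inequality at a local minimum of $V-\varphi$, applying It\^o's formula, dividing by $\theta$ and letting $\theta\downarrow 0$; for the subsolution half the paper simply cites the second part of Theorem~4.1 in \cite{ZaripDuf}, whereas you spell out the $\delta\theta$-optimal-control argument, which is precisely what that reference contains. Your additional remarks on joint continuity, the unbounded-control DPP, and the automatic positivity of $\varphi_l(z_0)$ are welcome clarifications that the paper leaves implicit.
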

\begin{proof}
	We again use the reasoning from the proof of Theorem 4.1 in \cite{ZaripDuf} but modify it for our case. To show that $V$ is a viscosity solution one need to show that it is a viscosity supersolution and a viscosity subsolution of the problem.\\
	Let us show at first that $V(t, l, h)$ is a viscosity supersolution for (\ref{eq:HJB21}). Let us look at $\phi \in C^2(D)$ and assume that $(t_0, l_0, h_0) \in D$ is a point where a minimum of $V-\phi$ is achieved.
	We can assume that $V(t_0, l_0, h_0) = \phi(t_0, l_0, h_0)$ and $V > \phi$ in $D$ without any loss of generality.
	To show that $V$ is a supersolution we need to check that $\mathcal{F}[\phi](t_0, l_0, h_0, \pi, c) \leq 0$, where
	\ben
	&& \mathcal{F}[\phi](t_0, l_0, h_0, \pi, c) =	\phi_t (t_0, l_0, h_0) + \frac{1}{2}\eta^2h_0^2\phi_{hh} (t_0, l_0, h_0) + \nonumber \\
																 	 &+&  (rl_0 + \delta h_0)\phi_l (t_0, l_0, h_0) + (\mu - \delta) h_0\phi_h (t_0, l_0, h_0) +  \nonumber \\
																	 &+& \max_{\pi} G[t_0, l_0, h_0, \pi] + \max_{c} H[t_0, l_0, h_0, c], \nonumber  \label{eq:subbound}
	\een
with $G[\pi]$ and $H[c]$ defined in (\ref{eq:HJB21}).

	 We consider a locally constant strategy $(\pi_0, c_0)$ for the period of time $\theta$ tending to zero. One can take $\theta = \min\{1/n, \tau\}$ where $\tau = \inf\{t \geq t_0: W_t = 0\}$ to ensure feasibility of this strategy. Since this strategy is suboptimal we can write (using the dynamic programming principle, \cite{FlemingSoner})
	\begin{eqnarray}
		V(t_0, l_0, h_0) &\geq& E\lb[ \int_{t_0}^{t_0 + \theta} \bPhi(t)U(c_0)dt + V(L_\theta, H_\theta, \theta) \rb]  \label{eq:DPP1} \\
										 &\geq& E\lb[\int_{t_0}^{t_0 + \theta} \bPhi(t)U(c_0)dt + \phi(L_\theta, H_\theta, \theta) \rb]. \nonumber
	\end{eqnarray}
	
	On the other hand, applying It\^o calculus to the smooth function $\phi$ we can expand
	$$
		E[\phi(\theta, L_\theta, H_\theta)] = \phi(t_0, l_0, h_0) + E\lb[\int_{t_0}^{t_0 + \theta} D\phi(s, L_s, H_s) ds\rb].
	$$
	Substituting into (\ref{eq:DPP1}) and using standard estimates to approximate the terms with $\phi(s, l_s, h_s)$, $\phi_l(s, l_s, h_s)$, $\phi_h(s, l_s, h_s)$, etc. via $\phi(t_0, l_0, h_0) + O(s)$, $\phi_l(t_0, l_0, h_0) + O(s)$, $\phi_h(t_0, l_0, h_0) + O(s)$ respectively, we obtain the bound
	$$
		E\lb[ \int_{t_0}^{t_0 + \theta} \mathcal{F}[\phi](t_0, l_0, h_0, c_0, \pi_0) \rb] + E\lb[\int_{t_0}^{t_0 + \theta} h(s) ds\rb] \leq 0,
	$$ 	
	with $h(s) = O(s)$. Dividing by $E[t_0 + \theta]$ and taking the limit $n \to \infty$ (so $\theta \to 0$ and $E\lb[\int_{t_0}^{t_0 + \theta} h(s) ds\rb] \to 0 $) we get (\ref{eq:subbound}) as $(\pi_0, c_0)$ can be arbitrary admissible pair.

	The second part of the proof is to show that $V(t, l, h)$ is a subsolution as well. However, the proof of the second part of Theorem 4.1 in \cite{ZaripDuf} can be applied verbatim here so we omit further details. $\bullet$
	
\end{proof}

The third result that is needed to finalize the proof of Theorem \ref{MT} is a {\em comparison principle} formulated below as Theorem \ref{thm:Comparison}. Results of this type are well-known in general for bounded controls, but due to the unboundness of the controls, classical proofs require adaptations for our case.
\begin{theorem} \label{thm:Comparison} \textbf{({\em Comparison Principle})}
Let $u(t, l, h)$ be an upper-semicontinuous concave viscosity subsolution of (\ref{eq:HJB21})
on $D$ and $V(t, l, h)$ is a supersolution of (\ref{eq:HJB21}) on $D$ which is bounded from below,
uniformly continuous on $D$, and locally Lipschitz in $D$, such that $u(t, l, h) \to 0$, $V(t, l, h) \to 0$ as $t \to \infty$ and $|u(t, l, h)| + |V(t, l, h)| \leq O(|l|^\gamma + |h|^\gamma)$ for large $l, h$, uniformly in $t$. Then $u \leq v$ on $\bD$.
\end{theorem}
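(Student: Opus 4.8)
The plan is to establish $u\le v$ by the doubling-of-variables method for viscosity solutions, argued by contradiction, with three adaptations to the present setting: a growth-type penalisation to cope with the unbounded domain $D=(0,\infty)^3$; explicit treatment of the genuine time variable $t$ (the feature distinguishing (\ref{eq:HJB21}) from the stationary equation in \cite{ZaripDuf}, and the source of the ``additional work'' mentioned above); and the handling of the unbounded controls $\pi$ and $c$. First I would rewrite (\ref{eq:HJB21}) as $F\big(t,l,h,Du,D^2u\big)=0$, noting that $F$ is proper and degenerate elliptic, that it carries \emph{no undifferentiated $u$-term} (since the discounting is not exponential, no $-\kappa V$ term survives), and that once the inner maximisations are performed the quantities $\max_{\pi}G[\pi]$ and $\max_{c\ge0}H[c]$ become Legendre-type functions of $(Du,D^2u)$ that are finite and locally Lipschitz precisely on the region where the relevant second-order entry is negative and $V_l$ is bounded away from $0$ --- which is where the whole argument lives.

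Assume, for contradiction, that $u(t_0,l_0,h_0)>v(t_0,l_0,h_0)$ at some point of $\bD$. The first step is localisation: for $\beta>0$ put $\Psi_\beta:=u-v-\beta(1+l+h)$; since $|u|+|v|\le O(|l|^\gamma+|h|^\gamma)$ with $0<\gamma<1$, the supremum of $\Psi_\beta$ is finite, is still positive for $\beta$ small, and --- using that $u,v\to0$ as $t\to\infty$, which upgrades to \emph{locally uniform} convergence via the uniform continuity of $v$ --- is attained on a compact subset of $\bD$. The faces $\{l=0\}$, $\{h=0\}$ and $\{t=0\}$ are dealt with exactly as in \cite{ZaripDuf}: at $h=0$ the second-order term degenerates consistently, at $l=0$ the $l$-drift points into $D$, and the backward-parabolic structure (data at $t=+\infty$) requires no condition at $t=0$. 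Next I would double the variables, maximising
\[
	u(t,l,h)-v(s,m,k)-\frac1{2\varepsilon}\lb(|t-s|^2+(l-m)^2+(h-k)^2\rb)-\beta(1+l+h)
\]
over $\bD\times\bD$, call $(t_\varepsilon,l_\varepsilon,h_\varepsilon,s_\varepsilon,m_\varepsilon,k_\varepsilon)$ a maximiser, and invoke the standard lemma that $\varepsilon^{-1}\big(|t_\varepsilon-s_\varepsilon|^2+(l_\varepsilon-m_\varepsilon)^2+(h_\varepsilon-k_\varepsilon)^2\big)\to0$ and that both arguments converge, as $\varepsilon\to0$, to a common maximiser of $\Psi_\beta$ lying in the interior $D$ (a boundary maximiser being excluded by the arguments of \cite{ZaripDuf}). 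Then the parabolic Crandall--Ishii theorem on sums (see \cite{userguide}) yields a common time-slope $a$ and symmetric matrices $X,Y$ with $(a,p_u,X)\in\bar J^{2,+}u(t_\varepsilon,l_\varepsilon,h_\varepsilon)$, $(a,p_v,Y)\in\bar J^{2,-}v(s_\varepsilon,m_\varepsilon,k_\varepsilon)$, $p_u-p_v=\beta(0,1,1)$, and $X,-Y$ jointly controlled by $\varepsilon^{-1}$ times the usual $2\times2$ block matrix, up to an $O(\beta)$ correction from the penalisation.

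Writing the subsolution inequality $F(t_\varepsilon,l_\varepsilon,h_\varepsilon,p_u,X)\le0$, the supersolution inequality $F(s_\varepsilon,m_\varepsilon,k_\varepsilon,p_v,Y)\ge0$, and subtracting, I would bound the terms one by one: the second-order difference $\tfrac12\eta^2\big(h_\varepsilon^2X_{hh}-k_\varepsilon^2Y_{hh}\big)$ is $o(1)+O(\beta)$ by the matrix inequality together with $|h_\varepsilon-k_\varepsilon|^2/\varepsilon\to0$; the first-order terms are $o(1)+O(\beta)$ using the (locally) Lipschitz structure of the drift coefficients and $|l_\varepsilon-m_\varepsilon|^2/\varepsilon,|h_\varepsilon-k_\varepsilon|^2/\varepsilon\to0$; and the discounting contribution $\bPhi(t_\varepsilon)U(c)-\bPhi(s_\varepsilon)U(c)$ is controlled by continuity of $\bPhi$ and $|t_\varepsilon-s_\varepsilon|\to0$. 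Because $F$ has no zeroth-order term, to extract a genuine contradiction I would run the argument with $u$ replaced by the strict subsolution $u-\varepsilon_0\chi(t)$, where $\chi$ is smooth, bounded, strictly decreasing and vanishing at $+\infty$ (e.g.\ $\chi(t)=1/(1+t)$), so that the subsolution inequality improves to $F\le-\delta_0<0$ on the compact localisation region; passing then to the limits $\varepsilon\to0$ and $\beta\to0$ turns the chain of estimates into $\delta_0\le0$, the desired contradiction (with a little bookkeeping to keep $\delta_0$ from decaying faster than the $O(\beta)$ errors as the localisation radii grow).

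The main obstacle is the unbounded control $\pi$, now entangled with the extra time dimension. The term $\max_{\pi}G[\pi]$ is finite only when the $ll$-entry of the corresponding Hessian is negative, in which case it equals a quadratic-over-linear expression that must be treated by completing the square; following \cite{ZaripDuf}, the concavity of $u$ and the local Lipschitz/semiconcavity regularity of $v$ force the correct sign and a uniform bound on that entry, after which the difference of the two maximised $G$-terms is estimated through the Crandall--Ishii matrix inequality --- but one must check that the $\beta$-penalisation correction, which in three variables simultaneously perturbs $X$, $Y$ and the gradients $p_u,p_v$, does not destroy this cancellation. By contrast the unbounded consumption control $c$ is benign: the bound $U(c)\le M(1+c)^\gamma$ together with $\lim_{c\to0}U'(c)=+\infty$ confines the maximiser $c^\ast$ to a compact set depending only on $V_l$ (bounded away from $0$ near the maximiser by strict monotonicity in $l$), so $\max_{c\ge0}H[c]$ is Lipschitz there and contributes only $O(\beta)+o(1)$. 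Assembling these estimates yields the contradiction, hence $u\le v$ on $\bD$.
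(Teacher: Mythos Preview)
Your outline is a reasonable viscosity comparison argument, but it takes a genuinely different route from the paper's proof, and in one place that difference matters.

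\textbf{How the paper argues.} The paper does \emph{not} double the time variable. It first adds a linear time penalty, setting $\Psi^{m,n}(t,x)=u(t,x)-v(t,x)-m(T_n-t)$ with $x=(l,h)$; the decay $u,v\to0$ at $t=\infty$ forces an interior maximiser $(t_0,x_0)$ for small $m>0$. Only the spatial variables are then doubled, via the Ishii--Zariphopoulou \emph{quartic shifted} penalty
\[
\phi(t,x,y)=\Big|\tfrac{y-x}{\xi}-4\varpi\Big|^{4}+\theta(l_x+h_x)^{\lambda}+\bar m(T_n-t),\qquad \lambda\in(\gamma,1),
\]
and the contradiction is obtained by showing that the sub/super inequalities at the maximiser force $\bar m\le0$ as $\theta,\xi,\|\varpi\|\to0$. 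The details are deferred to \cite{ZaripDuf} and \cite{Zarip1999}.

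\textbf{What you do differently, and the trade-offs.} You double all three variables with a quadratic penalty, invoke the parabolic Crandall--Ishii theorem, use a linear growth penalty $\beta(1+l+h)$, and create strictness via $u-\varepsilon_0\chi(t)$. This is the textbook parabolic template and is conceptually clean. The paper's choice of keeping a single $t$ and reading off $2\bar m=u_t-v_t$ at the maximum avoids the parabolic jet machinery altogether; its sublinear growth penalty $(l+h)^{\lambda}$ is tuned so that the drift error $(rl+\delta h)\,\partial_l[\theta(l+h)^{\lambda}]$ stays comparable to the penalty itself, which is tidier than your ``bookkeeping'' caveat about $O(\beta)$ errors growing with the localisation radius.

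\textbf{The one place to be careful.} The substantive technical difference is the spatial penalty. The quartic penalty with the shift parameter $\varpi$ is precisely the device in \cite{ZaripDuf}, \cite{Zarip1999} that handles the unbounded $\pi$-control when $v$ is only \emph{locally Lipschitz} (not semiconcave): the shift lets one force a nonzero gradient direction at the maximiser, which combined with the concavity of $u$ yields the strict sign $Y_{ll}<0$ needed to keep $\max_\pi G[\pi]$ finite and to estimate its difference. With your plain quadratic penalty this step does not come for free; your appeal to ``local Lipschitz/semiconcavity regularity of $v$'' conflates two hypotheses, and the theorem only gives you the weaker one. If you keep the quadratic penalty you must supply an independent argument for the sign and uniform negativity of the $ll$-entry on the supersolution side; otherwise, switching to the quartic-shifted penalty as in the paper closes the gap directly.
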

\begin{proof}
	Let us introduce $x:= (l, h),~ x \in \mathbb{R}^+ \times \mathbb{R}^+$ to make formulae shorter. Assume for contradiction that
	$$
		\sup_{(t, x) \in \bD} [u(t, x) - v(t, x)] > 0.
	$$
	Let $T_n \to \infty$ be an increasing sequence of time moments, $m > 0$ be a parameter and
	$$
		\Psi^{m, n}(t, x) = u(t, x) - v(t, x) - m(T_n - t).
	$$
 Since $u, v \to 0$ as $t \to \infty$, for sufficiently large $n$ and sufficiently small $m$ the maximum of $\Psi^{m, n}$ must occur in an internal point of $D$. So let us assume that $\bar{m} > 0$ and $T_n$ are such that $\sup_{(x, t) \in \bD} \Psi^{\bar{m}, n}(x, t)$ occurs in some point $(t_0, x_0)$ with $t_0 < T_n$.
	Let us define two functions
	$$
		\tilde{u}(t, x) = u(t, x) - \bar{m}(T_n-t)
	$$
	and
	$$
		\phi(t, x, y) = \lb| \frac{y - x}{\xi} - 4\varpi\rb|^4 + \theta(l_x + h_x)^\lambda + \bar{m}(T_n - t)
	$$
	where $x = (l_x, h_x)$, $y = (l_y, h_y)$ and $\lambda \in (\gamma, 1)$,
	$\theta, \xi > 0$, $\varpi \in \mathbb{R}_+^2$ being parameters to be varied later.
	Finally, we look at the point $(\bar{x}, \bar{y}, \bar{t})$ where the following function achieves a maximum
	$$
		\psi(t, x, y) = \tilde{u}(t, x) - v(t, y) - \phi(t, x, y).
	$$
	Since $\bar{t}$ is an interior point we can write
	\begin{equation}
		2\bar{m} = u_t(\bar{t}, \bar{x}) - v_t(\bar{t}, \bar{y}). \label{eq:uvbound}
	\end{equation}

	On the other hand, one can bound $u_t( \bar{t}, \bar{x}, \bar{t}) - v_t(\bar{y})$ merely by $\phi$ and its derivatives which can be written down explicitly.
It appears then, that as $\theta, \xi, \|\varpi\| \to 0$ the distance $\|\bar{x} - \bar{y}\|$ tends to zero and both $(\bar{t}, \bar{x})$,
$(\bar{t}, \bar{y})$ are close to $(t_0, x_0)$, so in the limit in terms of $\|\bar{x} - \bar{y}\| \to 0$ (\ref{eq:uvbound}) leads to $\bar{m} \leq 0$ and we get a contradiction. Again, further technical details are omitted and can be found in the proofs of Theorem 4.2 in \cite{ZaripDuf} and Theorem 3.2 in \cite{Zarip1999}. $\bullet$
\end{proof}

\subsection{Homotheticity reduction for utility functions of the HARA type}

Though the HJB equation (\ref{eq:HJB21}) generally fails to have a reduction with respect to the time variable, it is possible to reduce the dimension if the utility function is of the HARA type. In this paper we will develop further analysis just for the logarithmic utility function. This has several reasons. First of all, as we have already mentioned, the logarithmic case allows one to consider time distributions with subexponential tails, while enjoying the homotheticity reduction available for utility functions of the general HARA type. Secondly, the logarithmic case could in some sense be regarded as a limiting of the HARA case with $\gamma$ tending to zero. This allows to translate all the obtained results to the general power case of HARA utility with only straightforward modifications.

Rewriting the HJB equation (\ref{eq:HJB21}) for the logarithmic utility function $U(c(t))=\log {c(t)}$ we get
\begin{eqnarray}
  V_t(t, l, h) +  \frac{1}{2}\eta^2h^2V_{hh} (t, l, h) &+& (rl + \delta h)V_l (t, l, h) + (\mu - \delta) hV_h (t, l, h)  \nonumber \\
  &+& \max_{\pi} G[\pi] + \max_{c \geq 0} H[c] = 0 \label{eq:HJBLogInfty21}  \\
G[\pi] &=& \frac{1}{2}V_{ll}(t, l, h)\pi^2\sigma^2 + V_{lh}(t, l, h)\eta\rho\pi\sigma h   \\
       &+& \pi(\alpha - r)V_l(t, l, h), \nonumber \label{eq:GLogInfty}   \\
H[c] &=& -c V_l (t, l, h) + \bPhi(t)\log(c). \label{eq:HLogInfty}
\end{eqnarray}
Using the homotheticity of the logarithm and homogeneity of the differential operator applied to the value function in (\ref{eq:HJBLogInfty21}) we rewrite $V(t, l, h)$ in the following way
\begin{eqnarray}
V(t, l, h) = W(t, z) - \Psi_1(t)\log h + \Psi_2(t),\label{substitution_log}
\end{eqnarray}
having $z = l/h$ and $\Psi_1(t) =  \int_t^{\infty} \bPhi(s)ds$ and $\Psi_2(t)$ to be chosen later.

\begin{remark} {\em The form of the substitution can be defined via Lie group analysis of the given equation. This analysis for logarithmic and general HARA-type utility is done in \cite{boya}. }
\end{remark}

The Hamiltonian terms $\max_{\pi} G[\pi]$ and $\max_{c} H[c]$ in (\ref{eq:HJBLogInfty21}) now become
\begin{eqnarray}
	\max_{\pi} G[\pi]						&=& \max_{\pi' = \pi/h} \lb[ \frac{1}{2} W_{zz}\sigma^2 \pi'^2 + \pi'\lb(-\eta\rho\sigma(W_z+zW_{zz}) + (\alpha-r)W_z\rb) \rb], \label{eq:maxPi} \\
	\max_{c} H[c]							&=&	\max_{c' = c/h} [-c'W_z + \bPhi(t)\log(c')] + \bPhi(t)\log(h) \label{eq:maxH},					 	
\end{eqnarray}
and the optimal policies after formal maximization are
\ben
	\pi_{\star}(l, h) &=& h \sigma^{-2} \lb(\eta\rho\sigma z - ((\alpha -r) - \eta\rho\sigma)\frac{W_z}{W_{zz}}  \rb), \label{eq:optimalPiLoggen} \\
	c_{\star}(l, h) &=&  h\frac{\bPhi(t)}{W_z}, \label{eq:optimalCLoggen}
\een

We rewrite (\ref{eq:HJBLogInfty21}) using formulae (\ref{eq:maxPi})  and (\ref{eq:maxH})
\begin{eqnarray}
  W_t &+& \Psi_2'(t) + \lb(-\frac{\eta^2}{2} +(\mu-\delta)\rb)\Psi_1(t) + \frac{\eta^2}{2}z^2W_{zz} + (\eta^2 + r - (\mu-\delta))z W_z + \delta W_z  \nonumber\\
  		&+& \max_{\pi'}\lb[ \frac{1}{2} W_{zz}\sigma^2 \pi'^2 + \pi'\lb(-\eta\rho\sigma(W_z+zW_{zz}) + (\alpha-r)W_z\rb) \rb] \nonumber \\
  		&+& \max_{c' \geq 0}\left[-c'W_z + \bPhi(t)\log (c') \right] =  0. \label{eq:HJB222}
\end{eqnarray}

We provide the formal maximization of $H[\pi]$ and $G[c]$ and obtain
\be
	\max_{\pi} H[\pi] &=& -\frac{1}{2}\lb( (\eta \rho - (\alpha - r)/\sigma)^2\frac{W_z^2}{W_{zz}} + 2 \eta \rho( \eta \rho-(\alpha - r)/\sigma)zW_z +  \eta \rho^2z^2W_{zz}\rb) \\
	\max_{c} G[c] 	&=& \bPhi(t)\left(\log \bPhi(t) - 1 \right) - \bPhi(t) \log W_z.
\ee
so (\ref{eq:HJB222}) becomes
\begin{eqnarray}
  W_t &+& \Psi_2'(t) + \lb(-\frac{\eta^2}{2} +(\mu-\delta)\rb)\Psi_1(t) + \bPhi(t)(\log \bPhi(t) - 1)v + \nonumber \\
  		&+& d_2 z^2W_{zz} - \frac{d_1^2}{2}\frac{(W_z)^2}{W_{zz}} + d_3 z W_z + \delta W_z - \bPhi(t)\log W_z = 0, \label{eq:subti}
\end{eqnarray}
where
\begin{eqnarray} \label{constd}
	d_1 &=& \frac{\alpha - r  - \eta\rho\sigma}{\sigma^2},  \quad
d_2 =  \frac{1}{2} \eta^2(1-\rho^2),  \\
d_3 &=&  2 d_2 + \frac{\rho \eta}{\sigma}(\alpha - r) + r - (\mu - \delta). \nonumber
\end{eqnarray}

Now by choosing $\Psi_2(t)$ as a solution of the equation
 \begin{eqnarray} \label{eqnpsi2}
 \Psi_2'(t) + \lb(-\frac{\eta^2}{2} +(\mu-\delta)\rb)\Psi_1(t) + \bPhi(t)(\log \bPhi(t) - 1)=0 ,\\
 \Psi_2(t)\to 0, t\to \infty, \nonumber
 \end{eqnarray}
we can cancel out the terms dependent only on $t$ in the equation (\ref{eq:subti}). We arrive at
\begin{equation} \label{eq:HJBfinal}
 W_t - \frac{d_1^2}{2}\frac{(W_z)^2}{W_{zz}}  + d_2 z^2W_{zz} + d_3 z W_z + \delta W_z - \bPhi(t)\log W_z = 0.
\end{equation}

%In the similar way we use a substitution for the general HARA function and obtain the reduced equation for the general HARA type utility function
%\begin{eqnarray} \label{general_utility_hara}
%  &&W_t  - \frac{d_1^2}{2}\frac{(W_z)^2}{W_{zz}} + d_2 z^2W_{zz} + d_3 z W_z  + \delta W_z - \nonumber\\
%  &-& \bPhi(t) U'(c(t)) W_z = 0 \nonumber.
%\end{eqnarray}

\subsection{Bounds for the value function} \label{bounds}

The main tool we are going to use to obtain the bounds is the comparison principle given by Theorem \ref{thm:Comparison}. Since (\ref{eq:HJBfinal}) is a two-dimensional PDE and by itself is not a HJB equation, we argue as follows. Any formal sub- or super- solution of (\ref{eq:HJBfinal}) can be transformed to a sub- or super- solution of (\ref{eq:HJBLogInfty21}) with a substitution described by (\ref{substitution_log}). On the other hand, for the HJB equation (\ref{eq:HJBLogInfty21}) Theorem \ref{MT} and Theorem \ref{thm:Comparison} hold and we can obtain a lower and upper bound.

To shorten the notation we will use
\begin{eqnarray}
 && F(z, W, D W, D^2W) = \label{eq:HJBfinalMinus} \\
 && -W_t + \frac{d_1^2}{2}\frac{(W_z)^2}{W_{zz}} - d_2 z^2W_{zz}  - d_3 z W_z - \delta W_z + \bPhi(t)\log W_z.\nonumber
\end{eqnarray}
Note, in order to comply with the Definition \ref{defi:viscosity} we have to take the equation (\ref{eq:HJBfinal}) with the minus sign.

Determining an upper bound demands specific information on the cumulative distribution function {\em $\Phi (t)$} of the liquidation time. In the next Section this issue is addressed specifically for the cases of exponentially distributed liquidation time $\tau$ and Weibull distributed liquidation time. This two distributions seem to be the most practically applicable to the asset with low liquidity.

A lower bound, however, could be found without any specific information on $\Phi(t)$. Let us look on an optimal consumption problem without random income. This is a classical two dimensional Merton's problem for which we can write the HJB equation on the value function $u(t, z)$0. This problem corresponds to (\ref{eq:HJBLogInfty21}) but without any terms, containing the derivatives with respect to $h$ and with a notation $V \rightarrow u$, $l \rightarrow z$
\begin{eqnarray}
u_t &+& rlu_z + \max_{\pi} G[\pi] + \max_{c \geq 0} H[c] = 0, \label{eq:HJB}\\
G[\pi] &=& \frac{1}{2}u_{zz}(t, z)\pi^2\sigma^2 + \pi(\alpha - r)u_z(t, z), \label{eq:Gmax} \\
H[c] &=& -cu_z (t, z) + \bPhi(t)\log(c). \label{eq:Hmax}
\end{eqnarray}

After the formal maximization, one gets
$$
u_t + rlu_z - \frac{1}{2}\lb(\frac{\alpha-r}{\sigma}\rb)^2 \frac{u_z^2}{u_{zz}} + \bPhi (t)\left(\log \bPhi(t) - \bPhi(t)\right) - \bPhi(t) \log u_z = 0.
$$

We look for a solution in the form $u(t, z) = \Psi_1(t) \log z + \Theta_1(t)$, where again $\Psi_1(t) = \int_t^\infty \bPhi(s) ds$ and $\Theta_1(t)$ is a solution of
\begin{eqnarray}\label{eqnpsi3}
\Theta'_1 + \Psi_1 \lb(r + \frac{1}{2}\frac{(\alpha-r)^2}{\sigma^2}\rb) - \bPhi(\bPhi - \log \bPhi + \log \Psi_1) = 0.
\end{eqnarray}

One can easily check that such $u$ tends to zero uniformly as $t \to \infty$ and since the solution of (\ref{eq:HJB}) is a lower bound for our three-dimensional problem we obtain the following inequality for the lower bound
$$
\Psi_1(t) \log z + \Theta_1(t)  \leq W(z, t) = V(t, l, h) - \Psi_1 \log h + \Psi_2(t),
$$
or
$$
\Psi_1(t) \log l + \Theta_1(t) - \Psi_2(t)  \leq V(t, l, h).
$$
In the next Sections we consider specific liquidation time distributions. First we take the most simple one - an exponential distribution where a lot more can be said. In particular, we get asymptotically tight bounds for the value function and derivatives, which lead to asymptotic formulae for the optimal policies.
Not surprisingly, in the limit case when the random income vanishes the value function and optimal policies coincide with the classical Merton solution for the logarithmic case.

Another somewhat more complicated case is the Weibull distribution, where the bounds have no elementary representation, but their asymptotic can be derived using incomplete gamma functions.

\section{The case of exponential distributed liquidation time and logarithmic utility function} \label{expsec}
Now we examine the optimal consumption problem introduced before in the case of the logarithmic utility. Despite that we know from the more general theorems from \cite{ZaripDuf} that the optimal strategy does exist and the value function is the viscosity solution of the HJB equation, it is desirable to have the optimal policy in the feedback form (\ref{eq:optimalPiLoggen}) and (\ref{eq:optimalCLoggen}). In a general situation the feedback optimal policy is hard to establish since the value function is not a priori smooth. On the other hand, smoothness of the value function simplifies the problem so it becomes amenable to standard verification theorems of optimization theory, see e.g. \cite{FlemingSoner}. Here we prove that in the case at hand the value function is twice differentiable. As far as we know this fact was not explicitly addressed before, though the structure of our proof is similar to the paper \cite{ZaripDufFlem} where the smoothness was proved for the HARA utility case. Since the case without stochastic income is known to have a closed form solution and was derived by Merton \cite{Merton}, it is plausible to consider it as a zero-term approximation. Keeping that in mind, we will rigorously prove that value function tends to the Merton closed form solution in the limit of vanishing random income.

\subsection{Reduction of the HJB equation}

Recall the definition of the value function
\begin{equation} \label{valueFunLogExp}
	V(t, l, h) = \max_{(\pi, c) } E \left[ \int_t^\infty e^{-\kappa t}\log(c) dt | L(t) = l, H(t) = h \right].
\end{equation}
At first let us note that in the exponential liquidation time distribution case the problem is homogenous in time. We introduce $\tilde{V}(l, h)$
\be
	\tilde{V}(l, h) &=& \max_{(\pi, c) } E \left[ \int_t^\infty e^{-\kappa (s-t)}\log(c) ds \right] \\
										 &=& \max_{(\pi, c) } E \left[ \int_0^\infty e^{-\kappa v}\log(c) dv \right],
\ee
which is independent on time. Substituting
$$
V(t, l, h) = e^{-\kappa t}\tilde{V}(l, h)
$$
into the HJB equation (\ref{eq:HJB21}) we arrive at a time-independent PDE on $\tilde{V}(l, h)$. With a slight abuse of notation, hereafter we will use the same letter $V$ for $\tilde{V}$. The reduced equation takes the form

\begin{eqnarray}
 \frac{1}{2}\eta^2h^2V_{hh} (l, h) &+& (rl + \delta h)V_l (l, h) + (\mu - \delta) hV_h (l, h)  \nonumber \\
  &+& \max_{\pi} G[\pi] + \max_{c \geq 0} H[c] = \kappa V(l, h), \label{eq:HJBLogInfty}\\
G[\pi] &=& \frac{1}{2}V_{ll}(l, h)\pi^2\sigma^2 + V_{lh}(l, h)\eta\rho\pi\sigma h \\
       &+& \pi(\alpha - r)V_l(l, h), , \label{eq:GLogInfty2Dim}\  \\
H[c] &=& -cV_l (l, h) + \log(c). \nonumber
\end{eqnarray}

Now using substitution (\ref{substitution_log}) with $\Psi_1 =  \frac{1}{\kappa}$ and  $\Psi_2 = \frac{1}{\kappa^2}\left(\mu - \delta - \frac{\eta^2}{2}\right)$  we can argue exactly as in the general case and represent $V(l, h)$ in the form
\begin{equation}\label{eq:ValueFuncLog}
	V(l, h) = v(z) + \frac{\log h}{\kappa} +\frac{1}{\kappa^2}\left(\mu - \delta - \frac{\eta^2}{2}\right), ~~z = l/h,
\end{equation}
so $v(z)$ satisfies the equation
\begin{eqnarray}
  \frac{\eta^2}{2}z^2v'' &+& \max_{\pi}\left[\frac{1}{2}\pi^2\sigma^2v' - \pi\left((v' + zv'')\eta\rho\sigma + (\alpha-r)v' \right)\right] \nonumber \\ &+& \max_{c \geq -\delta}\left[-cv_z + \log (c + \delta) \right] =  \kappa v, \label{eq:HJBLogMax}
\end{eqnarray}
where $v' = v_z$ and the dimension of the problem is reduced to one. It is important to note that such reduction was possible due to the exponential decay, the homotethicity of the logarithmic function and the linearity of the control equations, which make the reduction (\ref{eq:ValueFuncLog}) sound.

Assuming that $v$ is smooth and strictly concave, we perform a formal maximization of the quadratic part (\ref{eq:GLogInfty2Dim}) which leads to
\begin{equation}
  \kappa v v''= - \frac{d_1^2}{2}(v')^2 + d_2 z^2(v'')^2
           + d_3 z v' v'' - v'' \left[1 + \log (v') \right], \label{eq:logRed}
\end{equation}
where again $d_1, d_2$ and $d_3$ are defined in (\ref{constd}).

Coming back to the original variables we obtain the optimal policies in the form
\begin{eqnarray}
	c_{\star}(l, h) &=& \arg \max_{c \geq 0} (-cV_l + \log c) =  \frac{h}{v'(l/h)}, \label{eq:optimalCLog}\\
	\pi_{\star}(l, h) &=& \arg \max_{\pi} \left( \frac{1}{2}\pi^2V_{ll}\sigma^2 + \pi\left(V_{lh}\eta\rho\sigma h + (\alpha - r)V_l \right) \right) \nonumber \\
	            &=&- \frac{\eta \rho}{\sigma}l - h\frac{d_1}{\sigma}\frac{v'(l/h)}{v''(l/h)}. \label{eq:optimalHLog}
\end{eqnarray}

Summing up, we announce the main result of this Section.
\begin{theorem}
\label{thm:main_exp}
	Suppose $r - (\mu - \delta) > 0$ and $d_1 \neq 0$.
	\begin{itemize}
		\item[\textup{(i)}]
	  	There is the unique $C^2(0, +\infty)$ solution $v(z)$ of (\ref{eq:logRed}) in a class of concave functions.
		\item[\textup{(ii)}]
		  For $l, h > 0$ the value function is given by (\ref{eq:ValueFuncLog}). For $h = 0, l > 0$ the value function $V(l,0)$ coincides with the classical Merton solution
		   \begin{equation}
		  V(l, 0) = \frac{1}{\kappa^2}\left[r + \frac{1}{2}\frac{(\alpha - r)^2}{\sigma^2} - \kappa \right] + \frac{\log(\kappa l)}{\kappa}.\label{eq:MertonSol}
\end{equation}
		
		\item[\textup{(iii)}]
	    If the ratio between the stochastic income and the total wealth tends to zero, the policies $(\pi^{\star}, c^{\star})$ given by (\ref{eq:optimalCLog}), (\ref{eq:optimalHLog}) tend to the classical Merton's policies
	    \begin{eqnarray}
	    c_{\star}(l, 0) \sim \kappa l, \label{eq:optimalCMerton} ~  \pi_{\star}(l, 0) \sim - \frac{(\alpha - r)l }{\sigma^2} \frac{V_l^2}{V_{ll}}. \label{eq:optimalHMerton}
	\end{eqnarray}
		
		\item[\textup{(iv)}]
		  Policies (\ref{eq:optimalCLog}) and (\ref{eq:optimalHLog}) are optimal.
	
	\end{itemize}
\end{theorem}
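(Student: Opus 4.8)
The plan is to prove the four items in order, treating (i) as the analytic core and deriving (ii)--(iv) from it. For (i) I would follow the regularization scheme of \cite{ZaripDufFlem}: view (\ref{eq:logRed}) as a quasilinear second-order ODE for the concave unknown $v$, restrict it to a bounded interval $I_n=[1/n,n]$, and add a small artificial viscosity $\varepsilon v''$ so that the resulting Dirichlet problem is uniformly elliptic, with boundary data pinned between the explicit sub- and super-solutions (the lower bound $\Psi_1\log z+\Theta_1-\Psi_2$ constructed in Section \ref{bounds} and the exponential-case upper bound announced there). Classical Schauder theory together with the method of continuity then yields a $C^{2,\alpha}(I_n)$ solution $v_{n,\varepsilon}$. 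The heart of the matter is a set of a priori estimates uniform in $n$ and $\varepsilon$: (a) an $L^\infty$ bound on $v_{n,\varepsilon}$ from the comparison functions; (b) two-sided bounds on $v'_{n,\varepsilon}$, in particular $v'$ bounded away from $0$ and from $\infty$ on compact subsets, obtained by differentiating the equation once and invoking the maximum principle (equivalently, from strict monotonicity of the value function, Lemma \ref{PVF}(ii)); and, crucially, (c) a uniform strict-concavity estimate $v''_{n,\varepsilon}\le-\delta_0(K)<0$ on each compact $K$, coming from differentiating the equation a second time and applying the maximum principle, and this is exactly where the nondegeneracy hypothesis $d_1\ne0$ and the sign condition $r-(\mu-\delta)>0$ are used. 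Given (a)--(c), interior Schauder estimates and Arzel\`a--Ascoli let me pass to the limit $\varepsilon\to0$ and then $n\to\infty$ to obtain a concave classical solution $v\in C^2((0,\infty))$. Uniqueness is then immediate: any such $v$ produces, through the substitution (\ref{eq:ValueFuncLog}), a classical --- hence viscosity --- solution of (\ref{eq:HJBLogInfty}), and Theorem \ref{MT} guarantees there is only one such object; in particular it coincides with the reduction of the value function, which also yields the $C^2$-regularity asserted in (i).

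For (ii), the representation $V(l,h)=v(l/h)+\kappa^{-1}\log h+\kappa^{-2}(\mu-\delta-\eta^2/2)$ is just the homotheticity reduction already carried out. To identify the boundary value at $h=0$ I would determine the asymptotics of $v$ as $z\to\infty$: squeezing $v$ between the lower bound of Section \ref{bounds} and the exponential-case upper bound, and then bootstrapping in the ODE (\ref{eq:logRed}), gives an expansion $v(z)=\kappa^{-1}\log z+C_\infty+o(1)$ in which $C_\infty$ is the explicit constant making $\lim_{h\to0^+}V(l,h)$ equal the right-hand side of (\ref{eq:MertonSol}); a direct check that (\ref{eq:MertonSol}) solves the one-dimensional Merton HJB obtained by setting $h=0$ in (\ref{eq:HJBLogInfty}) completes the identification. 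For (iii), I would differentiate that expansion to get $v'(z)\sim1/(\kappa z)$ and $z\,v''(z)/v'(z)\to-1$ as $z\to\infty$, and substitute these into the feedback formulas (\ref{eq:optimalCLog})--(\ref{eq:optimalHLog}); in the regime $h/(l+h)\to0$, i.e. $z\to\infty$, this gives $c_\star(l,h)\to\kappa l$ and $\pi_\star(l,h)$ converging to the classical Merton allocation, which is (\ref{eq:optimalCMerton})--(\ref{eq:optimalHMerton}).

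Finally, (iv) is a verification argument. Since $v\in C^2$ and is strictly concave, the function $V$ of (\ref{eq:ValueFuncLog}) is a classical solution of the HJB equation; the candidate controls (\ref{eq:optimalCLog})--(\ref{eq:optimalHLog}) are admissible, using the growth bound of Lemma \ref{PVF}(iv) both for the integrability requirements in the set of admissible policies and for the transversality condition $\lim_{T\to\infty}\bPhi(T)E[V]=0$; and the standard It\^o/supermartingale comparison (see e.g. \cite{FlemingSoner}) shows the payoff generated by the feedback pair equals $V$ while every other admissible pair yields at most $V$. The main obstacle throughout is step (i)(c): obtaining the uniform strict concavity of the regularized solutions so that the limit is a genuine $C^2$ solution rather than merely a viscosity solution, and this is precisely where the hypotheses $d_1\ne0$ and $r-(\mu-\delta)>0$ have to be exploited; a secondary technical point is making the $z\to\infty$ asymptotics in (ii) sharp enough (error $o(1)$, and $o(1/z)$ after differentiation) to pass cleanly to the Merton limit in (iii).
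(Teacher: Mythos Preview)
Your overall architecture is sound and your reading of where the hypotheses $d_1\neq 0$ and $r-(\mu-\delta)>0$ enter is correct, but the route you propose for (i) differs from the paper's and, as written, has a gap in the regularization step.

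The paper does not work directly with the ODE (\ref{eq:logRed}). Instead it introduces a \emph{dual} one-dimensional control problem (a synthetic wealth process $Z_t$) whose HJB equation is exactly (\ref{eq:dualHJBFull}); smoothness of the reduced value function is then proved for this dual problem. The regularization is not an added viscosity term but a \emph{truncation of the control}: one restricts $\pi\in[-L,L]$, so that the Bellman operator is bounded and Krylov's regularity theory for fully nonlinear elliptic Bellman equations on compact intervals applies to give $w_L\in C^2$. One then shows by contradiction (Steps~3--4 of Theorem~\ref{thm:smoothness}) that the constraint is inactive for large $L$ and that $w_L''$ stays bounded away from zero, and passes to the limit using stability of viscosity solutions. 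Your proposed ``$\varepsilon v''$'' regularization does not address the actual degeneracy: on $[1/n,n]$ the linear part $d_2 z^2 v''$ is already uniformly elliptic; the obstruction is the fully nonlinear term $-\tfrac{d_1^2}{2}(v')^2/v''$, which blows up as $v''\to 0^-$, and an additive $\varepsilon v''$ does nothing to tame it. The control truncation, by contrast, replaces this term by $\max_{|\pi|\le L}[\cdot]$, which is bounded, and this is what makes Schauder/Krylov applicable. Your maximum-principle plan for the strict-concavity bound (differentiating twice) would also run into trouble for the same reason; the paper instead argues by contradiction directly from the equation.

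For (ii)--(iii) your scheme is close in spirit but again differs in method. The paper obtains the sharp two-sided bound $M+\kappa^{-1}\log(\kappa l)\le V(l,h)\le M+\kappa^{-1}\log(\kappa(l+C_1\delta h))$ by a \emph{fictitious complete market} argument (replicating the income stream with a synthetic asset), which immediately gives the $O(1/z)$ error. For the derivative asymptotics $w'(z)=1/(\kappa z)+o(1/z)$ the paper uses a scaling trick, $w_\lambda(z)=w(\lambda z)-\kappa^{-1}\log\lambda$, and observes that $w_\lambda$ solves the same equation with $\delta$ replaced by $\delta/\lambda$, so local uniform convergence to the Merton solution plus concavity gives convergence of derivatives. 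This is cleaner than bootstrapping, and avoids having to justify term-by-term differentiation of an $o(1)$ expansion. Your verification argument for (iv) matches what the paper has in mind.
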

We have shown that the solution exists and tends to Merton case when $h = 0$. In the next step we will show the smoothness of the solution.

\subsection{The dual optimization problem and smoothness of the viscosity solution}
In this Section we introduce the dual optimization problem with a synthetic asset such that the optimization equation formally coincides with (\ref{eq:logRed}).  The regularity of the dual problem proves the regularity of the original one due to the uniqueness of the viscosity solution.

    Let us consider the investment-consumption problem with the wealth process $Z_t$ defined by
\begin{eqnarray}
	Z_t &=& (d_3 Z_t + d_1 \sigma \pi_t - c_t) dt + \sigma \pi_t dW^1_t + \eta Z_t \sqrt{1 - \rho^2} dW^2_t, \label{eq:dualProcess} \\
	Z_0 &=& z \geq 0, \nonumber
\end{eqnarray}
where $d_1$ and $d_3$ are defined in (\ref{constd}). We define the set of admissible controls $\hat{\mathcal{A}}(z)$ as the set of  pairs $(\pi, c)$ such that
\begin{enumerate}
	\item
	There exists an a.s. positive solution $Z_t$ of the stochastic differential equation (\ref{eq:dualProcess}).
	\item
	$c_t \geq -\delta$.
	\item
	$c$ and $\pi$  satisfy the integrability conditions (\ref{integrability}).
\end{enumerate}

The investor wants to maximize the average utility given by
$$
	\hat{\mathcal{U}}(c) = E \left[\int^{\infty}_0{e^{-\kappa \tau}\log( \delta + c(\tau))}d\tau \right] \label{eq:dualUtility}
$$
and the value function $w$ is defined as
$$
	w(z) = \sup_{(\pi, c) \in \hat{\mathcal{A}}(z)} \hat{\mathcal{U}}(c).
$$
The associated HJB equation is
\begin{equation}
	\kappa w = d_2 z^2w'' + \max_{\pi}\left[\frac{1}{2}\sigma^2\pi^2 w'' + d_1 \sigma \pi w'\right]  + d_3 zw' + \max_{c \geq -\delta}\left[-cw' + \log (c + \delta) \right], \label{eq:dualHJB}
\end{equation}
Next, keeping in mind $w' > 0, w'' < 0$, we can rewrite (\ref{eq:dualHJB}) as
\begin{equation}
	 -\frac{d_1^2}{2} \frac{(w')^2}{w''} + d_2 z^2w'' + d_3 zw' +  \delta w' - 1 - \log w' - \kappa w = 0. \label{eq:dualHJBFull}
\end{equation}

Now, it is easy to see that (\ref{eq:dualHJB}) reduces to (\ref{eq:logRed}) assuming that $w$ is smooth. Thus, if we prove that $w$ is smooth and concave, we will get the desired result for $v$ as well. The possibility to switch back and forth from $V$ to $v$ and $w$ is guaranteed by the existence and uniqueness of the viscosity solutions given by Theorem \ref{MT}. On the other hand, if a function is the value function for the corresponding optimization problem, and the HJB equations formally coincide, the value functions must coincide as well due to uniqueness. Therefore, it is sufficient to prove that $w$ is smooth.

From the previous Section we already know that if $D = (0, \infty)$ and $\bD = [0, \infty]$ the following theorem hold.

\begin{theorem}
 \begin{itemize}
	The function $w$ is the unique viscosity solution of (\ref{eq:dualHJB}) in $D$.
	The value function $V(l, h)$ is the unique viscosity solution of (\ref{eq:ValueFuncLog}) in $D \times D$.
	\end{itemize}
\end{theorem}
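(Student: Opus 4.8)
The plan is to obtain both assertions as specializations of the results already proved: existence of the two viscosity solutions comes from each function being a value function (Lemma \ref{lm:viscosity} and its dual analogue), uniqueness comes from the comparison principle (Theorem \ref{thm:Comparison} and its dual analogue), and the coincidence of $w$ with $v$ is then forced by uniqueness applied to the two optimization problems whose HJB equations agree.

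I would treat $V(l,h)$ first. In the exponentially distributed case the problem is time-homogeneous, so the value function of the time-dependent problem of Section \ref{problem} equals $e^{-\kappa t}$ times a function of $(l,h)$; substituting $V(t,l,h)=e^{-\kappa t}\tilde V(l,h)$ into (\ref{eq:HJB21}) and cancelling the smooth positive factor $e^{-\kappa t}$ reduces (\ref{eq:HJB21}) to (\ref{eq:HJBLogInfty}). Because $e^{-\kappa t}$ is smooth and positive, a $C^2$ function touches $\tilde V$ from above or below at $(l_0,h_0)$ precisely when the associated shifted-and-rescaled $C^2$ function touches $V(t,\cdot,\cdot)$ there, so the viscosity sub/supersolution properties of $\tilde V$ for (\ref{eq:HJBLogInfty}) are equivalent to those of $V$ for (\ref{eq:HJB21}). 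Lemma \ref{lm:viscosity} then gives that $\tilde V$, which I rename $V$, is a viscosity solution of (\ref{eq:HJBLogInfty}) on $D\times D$, while Lemma \ref{PVF} supplies the regularity — concavity, a lower bound, uniform continuity, local Lipschitz continuity, $O(|l|^\gamma+|h|^\gamma)$ growth and $e^{-\kappa t}\tilde V\to 0$ — needed to invoke the comparison principle. If $u$ is any other concave continuous viscosity solution of (\ref{eq:HJBLogInfty}) of polynomial growth, then $e^{-\kappa t}u$ is a viscosity solution of (\ref{eq:HJB21}) with the same growth that decays to $0$, and Theorem \ref{thm:Comparison}, applied in both directions within the class of concave solutions enjoying the regularity of Lemma \ref{PVF}, gives $e^{-\kappa t}u\equiv e^{-\kappa t}\tilde V$, hence $u\equiv\tilde V$. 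This proves the second bullet.

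For $w$ I would note that the dual control problem (\ref{eq:dualProcess})--(\ref{eq:dualHJB}) is itself an instance of the model of Section \ref{problem}: exponential liquidation time ($\bPhi(t)=e^{-\kappa t}$), a one-dimensional linear state $Z_t$ whose drift and diffusion coefficients are read off (\ref{eq:dualProcess}), and utility $U(c)=\log(\delta+c)$ on $c>-\delta$, which is strictly increasing, strictly concave, $C^2$, dominated by $M(1+c)^\gamma$ and has infinite marginal utility at the left endpoint, so the standing hypotheses hold. The proofs of Lemma \ref{PVF}, Lemma \ref{lm:viscosity} and Theorem \ref{thm:Comparison} use only linearity of the wealth dynamics, concavity and growth of $U$, and the dynamic programming principle; they transcribe verbatim after renaming $h\mapsto z$ and replacing the original coefficients by those of (\ref{eq:dualProcess}). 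Hence $w$ is the unique viscosity solution of (\ref{eq:dualHJB}) in $D$ inside the class of concave, polynomially bounded functions, which is the first bullet. Finally, once $w$ is known to be $C^2$ and strictly concave (established in the next subsection), the inner maximizations in (\ref{eq:dualHJB}) can be carried out and (\ref{eq:dualHJB}) becomes exactly (\ref{eq:logRed}) — the equation produced from (\ref{eq:HJBLogInfty}) by the homotheticity reduction $V(l,h)=v(z)+\kappa^{-1}\log h+\kappa^{-2}(\mu-\delta-\eta^2/2)$. Since $w$ and $v$ are value functions of well-posed problems with coinciding HJB equations, uniqueness forces $w\equiv v$, and $V$ is recovered from $w$ via (\ref{eq:ValueFuncLog}); this identification is what transports the smoothness of $w$ back to $v$ and to $V$.

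The step I expect to be the real obstacle is verifying that the a priori bounds of Lemma \ref{PVF}, especially the upper bound (iv), and the comparison argument genuinely survive the passage to the dual problem. The state equation (\ref{eq:dualProcess}) carries an extra Brownian term $\eta Z_t\sqrt{1-\rho^2}\,dW^2_t$ and the nonstandard drift $d_3 Z_t$, so the synthetic-asset/initial-wealth-equivalent replication device used to bound the value function must be rebuilt with the new coefficients; one has to check that the effective interest rate $d_3$ and the market price of risk induced by (\ref{eq:dualProcess}) still yield a finite classical Merton value and the $O(|z|^\gamma)$ majorant, and that admissibility — almost sure positivity of $Z_t$ and the integrability conditions — is preserved under the correspondence between dual and primal strategies. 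Once these coefficient-level checks are in place, the remainder is bookkeeping.
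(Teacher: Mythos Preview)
Your proposal is correct and matches the paper's approach: the paper gives no separate proof here, simply asserting ``From the previous Section we already know that\ldots'' and stating the theorem, i.e.\ both claims are read off as direct specializations of Theorem~\ref{MT} (via Lemma~\ref{PVF}, Lemma~\ref{lm:viscosity}, and Theorem~\ref{thm:Comparison}) in the exponential case. Your write-up is exactly the elaboration the paper omits---including the honest caveat that the dual dynamics (\ref{eq:dualProcess}) carry an extra diffusion term and a nonstandard drift, so the upper bound and comparison argument must be re-verified with those coefficients; the paper glosses over this point entirely.
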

Let us now prove the smoothness of the solution and of its' first derivative.

\begin{theorem} \label{thm:smoothness}
	The function $w$ is the unique concave $C^2(D)$ solution of (\ref{eq:dualHJB}).
\end{theorem}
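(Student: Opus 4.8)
The plan is to follow the strategy of \cite{ZaripDufFlem}: reduce the fully nonlinear HJB equation to a non-degenerate second-order ODE, produce a classical solution of that ODE, and then identify it with the viscosity solution $w$ via the uniqueness furnished by the comparison principle (Theorem \ref{thm:Comparison}).

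First I would record the qualitative properties of $w$ that come for free from its being the value function of a concave optimization problem: arguing exactly as in Lemma \ref{PVF}, $w$ is concave and strictly increasing on $D$, so that $w' > 0$ and $w'' \le 0$ wherever the second derivative exists (which, by concavity, is a.e.). The next and genuinely delicate step is to upgrade concavity to \emph{strict} concavity, i.e. to rule out affine pieces of $w$. Heuristically, on an interval where $w'' = 0$ the formal maximizer $\pi_\star \propto -w'/w''$ is infinite and the corresponding strategy inadmissible; rigorously, if $w - \phi$ had an interior local maximum at a point $z_0$ with $\phi''(z_0) = 0$ and $\phi'(z_0) = w'(z_0) > 0$, then the supersolution inequality for (\ref{eq:dualHJB}) written with the inner maximization over $\pi \in \mathbb{R}$ cannot hold, a contradiction. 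Hence $w'' < 0$ throughout $D$.

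With strict concavity in hand the maxima over $\pi$ and $c$ in (\ref{eq:dualHJB}) are attained at finite values and (\ref{eq:dualHJB}) is equivalent to (\ref{eq:dualHJBFull}); solving the latter for $w''$ exhibits it as a locally smooth function of $(z, w, w')$ on the open set $\{ w' > 0, \, w'' < 0 \}$. Consequently any concave \emph{classical} solution automatically satisfies a quasilinear ODE with a smooth right-hand side and is therefore $C^\infty(D)$, so it suffices to produce a classical solution. To that end I would introduce a family of uniformly elliptic approximating problems on bounded intervals $[1/n, n]$, with boundary values read off from the bounds on the value function obtained in Section \ref{bounds} and, if needed, a vanishing-viscosity term; these admit classical solutions $w_n$ by classical elliptic theory (see e.g. \cite{FlemingSoner}). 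Combining the a priori bounds of Lemma \ref{PVF} and Section \ref{bounds} with interior gradient and second-derivative H\"older estimates, I would obtain uniform local $C^2$ bounds for the $w_n$, extract a locally uniformly convergent subsequence, verify that the limit is a concave $C^2$ function and a viscosity solution of (\ref{eq:dualHJB}), and conclude by Theorem \ref{thm:Comparison} that it coincides with $w$; hence $w \in C^2(D)$. Uniqueness in the class of concave $C^2$ solutions is then immediate, since such a solution is in particular a viscosity solution and therefore equals $w$.

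The main obstacle is, as usual for Inada-type stochastic control, concentrated at the edges: first, proving strict concavity (the non-vanishing of $w''$) so that the ODE reformulation is legitimate; second, obtaining the uniform interior estimates required to pass to the limit, since the equation degenerates as $z \to 0$ --- the coefficient $d_2 z^2$ of $w''$ vanishes while $w' \to \infty$, reflecting $\lim_{c \to 0} U'(c) = +\infty$ --- and since as $z \to \infty$ the solution must match the Merton asymptotics (\ref{eq:MertonSol}). Controlling $w'$ and $w''$ away from $0$ and $\infty$ uniformly in $n$, most naturally by comparison with explicit sub- and supersolutions built from the Merton solution (\ref{eq:MertonSol}) and the lower bound of Section \ref{bounds}, is the crux; once it is available, the classical ODE and Schauder machinery finish the argument routinely.
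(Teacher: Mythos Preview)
Your proposal has the right skeleton---approximate by smoother problems, invoke elliptic regularity, identify the limit with $w$ via Theorem~\ref{thm:Comparison}---but the regularization you propose misses the actual source of non-uniform ellipticity. The difficulty is not degeneracy at $z \to 0$ or matching Merton as $z \to \infty$; on any compact $[z_1, z_2] \subset D$ the coefficient $d_2 z^2$ is bounded away from zero. The obstruction is the unbounded control $\pi$: the term $\max_{\pi \in \mathbb{R}}\bigl[\tfrac{1}{2}\sigma^2\pi^2 w'' + d_1\sigma\pi w'\bigr] = -\tfrac{d_1^2}{2}(w')^2/w''$ blows up as $w'' \to 0^-$, so the Hamiltonian is not locally bounded in the Hessian variable. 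Restricting $z$ to $[1/n, n]$ does nothing for this, and adding a vanishing-viscosity term $\epsilon w''$ is equally ineffective, since the issue is unboundedness of the nonlinearity rather than vanishing of the second-order coefficient. Classical elliptic or Schauder theory simply does not apply to such Hamiltonians, so your approximating problems are no easier than the original.

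The paper's device, taken from \cite{ZaripDufFlem}, is to truncate the \emph{control}: impose $|\pi| \le L$, so that $\max_{|\pi| \le L}[\cdots]$ is a bounded Lipschitz function of $(w', w'')$ and the resulting equation (\ref{eq:dualHJBL}) is a genuine uniformly elliptic Bellman equation on $[z_1, z_2]$, to which Krylov's theory \cite{krylov} yields a $C^2$ solution $w_L$. The remaining steps show that $w_L \to w$ locally uniformly, that $-w_L'/w_L''$ stays bounded so the constraint becomes inactive for large $L$, and that $w_L''$ is bounded away from zero uniformly in $L$; strict concavity of $w$ thus emerges \emph{a posteriori}, rather than being established in advance as you attempt. (Incidentally, in your strict-concavity paragraph you test the supersolution condition at a local \emph{maximum} of $w - \phi$; the supersolution test is at a local minimum.) Once your domain restriction is replaced by control truncation, the argument collapses into the paper's.
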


To start with the proof of the theorem we need some explicit bounds for $w$.

\begin{lemma} \label{lm:boundness}
	The following bounds hold for $w(z)$
	\begin{equation}
     C_1\log(z+C_2) < w(z) < (z+C_3)^\gamma, \quad z \in \Omega \label{eq:wBounds}
  \end{equation}
	for some constants $C_1, C_2, C_3 > 0$ and $0 < \gamma < 1$.
\end{lemma}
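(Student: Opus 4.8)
The plan is to obtain the two bounds separately by comparing $w$ with value functions of related optimization problems whose solutions are explicit or nearly explicit, exploiting the fact that $w$ is itself a value function (for the dual investment--consumption problem of the previous subsection). For the \emph{lower} bound, the idea is to exhibit a single concrete admissible control in $\hat{\mathcal{A}}(z)$ and compute the utility it generates, which then automatically bounds $w(z)$ from below. The natural choice is the ``no-risky-investment'' policy $\pi_t \equiv 0$ together with a consumption rate proportional to current wealth, say $c_t = \beta Z_t$ for a suitable constant $\beta > 0$ chosen so that the integrability constraint (\ref{integrability}) and the sign constraint $c_t \geq -\delta$ hold; under $\pi \equiv 0$ the process $Z_t$ in (\ref{eq:dualProcess}) becomes a geometric-Brownian-motion-type process with drift $d_3 - \beta$ and volatility $\eta\sqrt{1-\rho^2}$, so $\log Z_t$ is an explicit Gaussian process and $E[\log(\delta + \beta Z_t)]$ can be bounded below by $E[\log(\beta Z_t)] = \log(\beta z) + (d_3 - \beta - \tfrac12\eta^2(1-\rho^2)) t$. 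Integrating $e^{-\kappa t}$ against this affine-in-$t$ quantity gives a closed expression of the form $C_1 \log z + (\text{const})$, which after absorbing the constant into the logarithm (enlarging the argument to $z + C_2$, valid since $\log(z+C_2) \le \log z + \log(1 + C_2/z)$ may need the reverse inequality — so one instead picks $C_1$ small and $C_2$ large enough that $C_1\log(z+C_2)$ sits below the affine-log lower bound for all $z \in \Omega$) yields the left inequality in (\ref{eq:wBounds}).

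For the \emph{upper} bound, the plan is to dominate the dual problem by an auxiliary problem on a complete market, exactly in the spirit of the proof of property (iv) in Lemma \ref{PVF}: the constant income term $\delta$ in $\log(\delta + c)$ plays the role of a perpetual income stream whose ``initial wealth equivalent'' is a constant multiple of $1$ (it is bounded because $\delta$ is constant), and the wealth dynamics (\ref{eq:dualProcess}) can be embedded into a standard Merton market after adding this finite endowment. Since $\log u \le \tfrac{1}{\gamma}(u^\gamma - 1) \le \tfrac{1}{\gamma} u^\gamma$ for any $\gamma \in (0,1)$, the logarithmic utility is majorized by a power utility, and Merton's closed-form solution for the power-utility investment--consumption problem with initial wealth $z + C$ (for the appropriate constant $C$ coming from the income equivalent and the shift by $\delta$) is of the form $(\text{const})\,(z+C)^\gamma$. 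Choosing constants appropriately — in particular taking $C_3 \ge C$ and possibly shrinking the effective $\gamma$ — gives $w(z) < (z + C_3)^\gamma$.

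The main obstacle I expect is the \emph{upper} bound: one must verify carefully that the dual wealth process (\ref{eq:dualProcess}), with its state-dependent volatility term $\eta Z_t\sqrt{1-\rho^2}\,dW^2_t$, genuinely can be super-replicated on a fictitious complete market so that the reduction to Merton's problem is legitimate, and that the resulting income-equivalent constant is finite under the standing parameter assumptions (here the hypothesis $r - (\mu-\delta) > 0$ from Theorem \ref{thm:main_exp}, equivalently a sign condition on $d_3$ and $\kappa$, is what guarantees the relevant discounted integral converges). This is precisely the kind of martingale-duality argument carried out in \cite{HuangPages} and \cite{KaratzasShreve} and invoked already in Lemma \ref{PVF}, so I would import it with the needed modifications rather than redo it in detail; the lower bound, by contrast, is a direct and essentially self-contained computation. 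A minor technical point to handle in both directions is matching the additive constants to the stated form of (\ref{eq:wBounds}) (logarithm of a shifted argument on the left, power of a shifted argument on the right), which is routine once the leading-order behaviour is pinned down.
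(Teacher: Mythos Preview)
Your plan is sound in outline but takes a genuinely different route from the paper. The paper's proof is a direct PDE argument: it simply plugs the two candidate functions into the reduced one-dimensional equation (\ref{eq:dualHJBFull}) and checks by hand that $C_1\log(z+C_2)$ is a viscosity subsolution (the leading $\log$-term dominates with the right sign once $C_1$ is chosen relative to $\kappa$) and that $(z+C_3)^\gamma$ is a supersolution (the term $-\tfrac{d_1^2}{2}(w')^2/w''$ contributes $\tfrac{d_1^2}{2}\,\tfrac{\gamma}{1-\gamma}\,(z+C_3)^\gamma$, which can be made to dominate everything else by pushing $\gamma\uparrow 1$). The bounds then drop out of the comparison principle (Theorem~\ref{thm:Comparison}) with no probabilistic input at all.

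Your approach instead works at the level of the control problem, essentially replaying for the one-dimensional dual problem the probabilistic arguments already used for property~(iv) of Lemma~\ref{PVF}: a concrete suboptimal control for the lower bound, and super-replication on a fictitious complete market plus the elementary majorization $\log u \le \gamma^{-1}u^\gamma$ for the upper bound. This is perfectly legitimate and has the virtue of being conceptually transparent (one sees \emph{why} the bounds have this shape), but it is heavier: as you yourself flag, the dual wealth process (\ref{eq:dualProcess}) carries the uncontrolled diffusion term $\eta Z_t\sqrt{1-\rho^2}\,dW^2_t$, so the embedding into a Merton market is not quite off-the-shelf and needs an extra tradeable asset to span that noise; and for the lower bound your proportional-consumption control gives only $\tfrac{1}{\kappa}\log z + \text{const}$, which diverges to $-\infty$ as $z\downarrow 0$ and therefore cannot by itself be pushed under $C_1\log(z+C_2)$ for all $z>0$ --- you need to combine it with the trivial control $c\equiv 0$ (yielding $w(z)\ge \kappa^{-1}\log\delta$) to cover the small-$z$ regime. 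None of this is fatal, but it is considerably more work than the paper's two-line verification that the candidate functions have the right sign when substituted into (\ref{eq:dualHJBFull}).
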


\begin{proof}
The function
$$
	W^{-}(z) = C_1\log(z + C_2), \quad z \in \Omega
$$
is a subsolution for (\ref{eq:dualHJBFull}) as the coefficient of the leading logarithmic term is negative provided $C_1, C_2 > 0$ are appropriately chosen.
On the other hand, the function
$$
	W^{+}(z) = (z+C_3)^\gamma, \quad z \in \Omega
$$
is a supersolution provided $0 < \gamma < 1$ is sufficiently close to $1$. Indeed, the leading term is $z^\gamma$ with the coefficient $-(d_1^2(w')^2)/(2w'')$, which in turn grows as $-\gamma/(\gamma-1)$ and becomes arbitrarily large as $\gamma$ tends to $1$.

Thus, the desired bound (\ref{eq:wBounds}) is a consequence of the comparison principle formulated in Theorem \ref{thm:Comparison}. $\bullet$
\end{proof}
Now we can prove Theorem \ref{thm:smoothness}.
\begin{proof} It is known that uniformly elliptic equations enjoy regularity, but as before the main obstacle is the lack of uniform bounds. Our proof will closely follow
the approach used in \cite{ZaripDufFlem}, the original problem is approximated  by a convergent family of optimization problems such that the approximating equations is uniformly elliptic and thus smooth. Then the smoothness follows from the stability of viscosity solutions and uniqueness.
	
	\textbf{Step 1.} Consider the value function
	$$
	w_L(z) = \sup_{(\pi, c) \in \hat{\mathcal{A}}(z)} \hat{\mathcal{U}}(c).
  $$
	for the problem with the additional strategy constraint $-L \leq \pi_t \leq L$ for almost every $t$. Arguing as in Section \ref{problem} we conclude that that $w^L$ is an increasing continuous function, which is the unique viscosity solution to
  \begin{eqnarray}
	  \kappa w_L &=& d_2 z^2w_L'' + \max_{-L \leq \pi \leq L}\left[\frac{1}{2}\sigma^2\pi^2 w_L'' + d_1 \pi w_L'\right] \label{eq:dualHJBL} \\
	        &+& d_3 zw_L' + \max_{c \geq -\delta}\left[-cw_L' + \log (c + \delta) \right]. \nonumber
   \end{eqnarray}
 Moreover, the bounds of Lemma ~\ref{lm:boundness} hold so
 \begin{equation}
     C_1\log(z+C_2) < w^L(z) < (z+C_3)^\gamma. \nonumber
 \end{equation}
 Thus, there exists a concave function $\hat{w}$ such that $w_L \to \hat{w}, L \to \infty$ locally uniformly. Then due to the stability property and
 uniqueness of the viscosity solution the function $\hat{w}$ is a viscosity solution of (\ref{eq:dualHJB}) and thus coincides with $w$. Therefore
 $w_L \to w, L \to \infty$ locally uniformly.

 \textbf{Step 2.} We claim that $w_L$ is a smooth function on an arbitrary interval $[z_1, z_2]$ such that $z_1 > 0$. Due to concavity we may assume that derivatives
 $w_L'(z_1), w_L'(z_2)$ exist.
 On the one hand the function $w_L$ is the unique solution of the boundary problem
  \begin{eqnarray}
	  \kappa u &=& d_2 z^2u'' + \max_{-L \leq \pi \leq L}\left[\frac{1}{2}\sigma^2\pi^2 u'' + d_1\sigma \pi u'\right]  \label{eq:elliptic} \\
	        &+& d_3 zu' + \max_{c \geq -\delta}\left[-cu' + \log (c + \delta) \right], \nonumber  \\
	        u(z_1) &=& w_L(z_1), \quad u(z_2) = w_L(z_2), \quad z \in [z_1, z_2]. \nonumber
  \end{eqnarray}
 On the other hand, according to the general theory of fully nonlinear elliptic equations of second order of Bellman type in a compact region, (see {\em Krylov} \cite{krylov}), (\ref{eq:dualHJBL}) has a unique $C^2$ solution in $[z_1, z_2]$ that coincides with $w_L$  and $w_L$ is smooth on $[z_1, z_2]$.

 \textbf{Step 3.}  We show that the constraint $-L \leq \pi_t \leq L$ is superfluous for sufficiently large $L$ and can be eliminated. First it is clear that due to concavity and monotonicity of $w_L$, the condition $-L \leq \pi_t \leq L$ in (\ref{eq:dualHJBL}) can be substituted with $\pi_t \leq L$.
Now we prove that
$$
 \sup_{z \in (z_1, z_2)} \left[ -\frac{d_1^2(w_L')^2}{2w_L''} \right] < L
$$
for sufficiently large $L$. Assume the contrary for contradiction. Then there is a sequence $z_n \in (z_1, z_2)$, $L_n \to \infty$ such that
$$
  -\frac{d_1^2(w_L'(z_n))^2}{2w_L''(z_n)} > L_n,
$$
and
\begin{eqnarray}
	 \kappa w_L \geq d_2 z^2w_L'' - L_n + d_3 zw_L' + \left[ \delta w_L' - 1 - \log w_L'\right]. \label{eq:inequalityW}
\end{eqnarray}
Since $w_L \to w$ and both function are monotone and concave there exist constants $C_1, C_2$ such that
$$
 C_1 < w_L'(z) < C_2, \quad z \in [z_1, z_2]
$$
for all sufficiently large  $L$, and also $w_L'' \to 0$ as $n \to \infty$. But this contradicts (\ref{eq:inequalityW}) as $z_n$ takes values in a bounded interval so $w_L(z_n)$ is bounded as well.

 \textbf{Step 4.} We are going to show that there is a constant $K < 0$ which does not depend on $L$ such that
$$
  w_L''(z) < K, \quad z \in [z_1, z_2].
$$
Arguing again by contradiction suppose there is a sequence $z_n \in [z_1, z_2]$ such that $w_L''(z_n) \to \infty$.
Then analogously to Step 3, the right hand side of (\ref{eq:elliptic}) grows to infinity since $w_L'(z)$ on the interval that is bounded. At the same time the left hand side stays bounded as a value of a continuous function on a bounded interval.

 \textbf{Step 5.} Putting it all together, we have the following chain of implications. The functions $w_L$ are unique smooth solutions in the class of concave functions to the boundary problem (\ref{eq:elliptic}) for some sufficiently large $M > 0$. Since $w_L \to w$, it follows that $w$ is the unique viscosity solution of (\ref{eq:elliptic}) in the class of concave functions. On the other hand, the equation (\ref{eq:elliptic}) possesses the unique smooth solution, see \cite{krylov}, which must coincide with the viscosity solution. Thus $w$ is  a $C^2$-smooth function on $[z_1, z_2]$ and the claim of the theorem follows since the interval is arbitrary. $\bullet$
\end{proof}

\subsection{Asymptotic behavior of the value function.}
In this Section we examine the asymptotic behavior of the value function $V(t, l, h)$ and show that as $l/h \to \infty$ it becomes the classical Merton solution.
\begin{theorem}
	There is a positive constant $C_1$ such that
\begin{equation}
	M + \frac{\log(\kappa l)}{\kappa} \leq V(l, h) \leq M + \frac{\log(\kappa (l + C_1 \delta h))}{\kappa}, \label{thm:boundslog}
\end{equation}	
where
$$
M = \frac{1}{\kappa^2}\left[r + \frac{1}{2}\frac{(\alpha - r)^2}{\sigma^2} - \kappa \right]
$$
is a constant from the Merton's formula (\ref{eq:MertonSol}).
\end{theorem}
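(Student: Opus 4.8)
\emph{Proof plan.} The two inequalities in (\ref{thm:boundslog}) are obtained by quite different means: the left one is essentially free from the monotonicity already established, while the right one rests on the replication (``initial wealth equivalent'') argument that was used in Lemma~\ref{PVF}(iv). For the lower bound, note that by Lemma~\ref{PVF}(i) the value function is non-decreasing in $h$, so $V(l,h)\ge V(l,0)$; concretely, any policy admissible for the pure consumption--investment problem with initial liquid wealth $l$ remains admissible here, because the dividend term $\delta H_t\ge 0$ only increases the liquid wealth path, while the utility depends on $c$ alone. By Theorem~\ref{thm:main_exp}(ii) the boundary value $V(l,0)$ equals the classical Merton solution (\ref{eq:MertonSol}), i.e. $V(l,0)=M+\kappa^{-1}\log(\kappa l)$, which is exactly the left inequality. (Equivalently, one may invoke the general lower bound of Section~\ref{bounds}, which in the exponential case collapses to the same expression after taking $\Psi_1\equiv 1/\kappa$ and solving the linear ODE for $\Theta_1$.)

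\emph{Upper bound.} This is where the work lies. Following \cite{HuangPages,KaratzasShreve,ZaripDuf}, I enlarge the market $(B_t,S_t)$ by a synthetic risky asset $S'_t$ driven by the independent Brownian motion $W^2$, chosen with zero excess rate of return so that the enlarged market $(B_t,S_t,S'_t)$ is complete while its market price of risk, and hence the Merton logarithmic value function, is unchanged and still given by $w\mapsto M+\kappa^{-1}\log(\kappa w)$. On the complete market the cash flow generated by the illiquid position is spanned (the dividend stream $\delta H_t$; the lump payment at the random liquidation instant contributes, by the Remark after Proposition~\ref{proposition}, only an additive constant and is immaterial for the bound), and its time-zero value $f(h)=\delta E_h\!\big[\int_0^\infty e^{-\kappa t}\xi_t H_t\,dt\big]$ is, by homogeneity of $H_t$ in $h$, of the form $f(h)=C_1\delta h$ with $0<C_1<\infty$; finiteness of $C_1$ is precisely where the standing hypothesis $r-(\mu-\delta)>0$ (together with the sign of $\rho\eta(\alpha-r)/\sigma$) enters. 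Hence every admissible policy for the original problem with data $(l,h)$ is dominated by a policy on the synthetic market with initial wealth $l+f(h)=l+C_1\delta h$, so that $V(l,h)\le M+\kappa^{-1}\log(\kappa(l+C_1\delta h))$, which is the right inequality in (\ref{thm:boundslog}).

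\emph{Main obstacle and an alternative.} The delicate point is the upper bound: one must choose the synthetic asset so that it genuinely completes the market \emph{and} preserves the Merton constant $M$, and one must verify the growth estimate $f(h)\le C_1\delta h$ --- this is exactly where $r-(\mu-\delta)>0$ and $d_1\ne 0$ are used, the remaining details being as in \cite{ZaripDuf}. A self-contained alternative bypasses duality: check directly that $\overline V(l,h)=M+\kappa^{-1}\log(\kappa(l+C_1\delta h))$ is a viscosity supersolution of the reduced HJB equation (\ref{eq:HJBLogInfty}) and $\underline V(l,h)=M+\kappa^{-1}\log(\kappa l)$ a subsolution for $C_1$ large enough, and conclude by the comparison principle (Theorem~\ref{thm:Comparison}). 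After the formal maximisations in $\pi$ and $c$ all terms linear in $l$ cancel against $M$ and $-\kappa\overline V$, leaving a remainder proportional to $\delta h/(l+C_1\delta h)$ whose sign is governed by $1-C_1\big(r-(\mu-\delta)-\rho\eta(\alpha-r)/\sigma\big)$ plus a manifestly non-positive quadratic term; pushing this inequality through is the real computational step, and it reproduces the same structural condition as the duality argument.
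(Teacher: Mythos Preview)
Your proposal is correct and follows essentially the same route as the paper: the lower bound by monotonicity in $h$ (any admissible strategy for the pure Merton problem with $h=0$ remains admissible when $h>0$), and the upper bound via the synthetic-asset replication argument of \cite{HuangPages,KaratzasShreve,ZaripDuf}, yielding an initial-wealth equivalent $f(h)\le C_1\delta h$ and hence domination by the Merton value with wealth $l+C_1\delta h$. You are in fact slightly more careful than the paper in insisting that the added asset carry zero excess return so that the Merton constant $M$ is preserved on the enlarged market; the alternative super/subsolution verification via Theorem~\ref{thm:Comparison} that you sketch is not in the paper but is a legitimate self-contained substitute for the duality step.
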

\begin{proof}
 The proof is based on the idea mentioned in Lemma \ref{PVF}, but in the specific exponentially distributed liquidation time case the bounds could be found explicitly. The left-hand inequality is obvious since any strategy $(\pi, c)$ for the classical problem with $L_0 = l, H_0 = 0$ is admissible for the problem with any non-zero initial endowment as well.
	For the right-hand side, let us consider a fictitious investment-consumption problem without any stochastic income but with an additional synthetic asset with the price process $S'$
\begin{eqnarray*}
	dS'_t &=& \alpha'S'_t + \sigma'_1S'_tdW_t, \quad t \geq 0 \\
	S'_0 &=& s', \quad s' > 0,
\end{eqnarray*}
with appropriate constants $\alpha'$ and $\sigma'$. Next, we define the \emph{initial wealth equivalent} of the stochastic income defined by
\begin{equation*}
	V_\delta(l, h) = \delta E_h\left[ \int^{\infty}_0 e^{-rt}\xi_t H_t dt \right],
\end{equation*}
where
$$
	\xi_t = \exp\left(-\frac{1}{2}(\theta^2_1 + \theta^2_2) + \theta_1W^{(1)}_t + \theta_2W_t\right),
$$
$\theta_1 = (\alpha-r)/\sigma_1$ and $\theta_2 = (\alpha' - r)/\sigma'_1$.

As we mention in the proof of Lemma \ref{PVF} (see page \pageref{PVF}), by a careful choice of the constants $\alpha', \sigma'$ the stochastic income rate $H_t$ can be replicated by a self-financing strategy on the complete market $(B_t, S_t, S'_t)$ with the additional initial endowment $f(h) < C_1 \delta h$, see \cite{KaratzasShreve}, \cite{HuangPages} and \cite{ZaripDuf}. Thus, any average  utility generated by the strategy $(\pi, c) \in \mathcal{A}(l, h)$ can be attained in the settings of a classical Merton's problem with the initial wealth $l + f(h) < l + C_1\delta h$. This actually gives the right-hand bound in (\ref{thm:boundslog}). $\bullet$
\end{proof}
 From this theorem we immediately get that $V(l, h)$ behaves as the classical Merton solution (\ref{eq:MertonSol}) as $\delta \to 0$ or $l/h \to \infty$.

\begin{corollary}
	\label{corr:d}
	$V_\delta(l, h)$ converges locally uniformly to $M + \log(\kappa l)/\kappa$ as $\delta \to 0$.
\end{corollary}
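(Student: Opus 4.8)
The plan is to deduce the corollary directly from the two-sided bound of the preceding theorem by a squeeze argument, writing $V_\delta$ for the value function to record its dependence on the dividend rate $\delta$. That theorem gives, for every $\delta > 0$ and every $l, h > 0$,
$$
M + \frac{\log(\kappa l)}{\kappa} \;\leq\; V_\delta(l,h) \;\leq\; M + \frac{\log\!\big(\kappa(l + C_1\delta h)\big)}{\kappa},
$$
where $C_1 > 0$ is the constant produced by the replication/initial-wealth-equivalent argument and depends only on the market coefficients, not on $\delta$. The lower bound is already exactly the claimed limit and is independent of $\delta$, so the entire content of the statement is to control the gap between the two bounds as $\delta \to 0$.

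First I would rewrite the gap in the convenient form
$$
0 \;\leq\; V_\delta(l,h) - \left(M + \frac{\log(\kappa l)}{\kappa}\right) \;\leq\; \frac{1}{\kappa}\log\!\left(1 + \frac{C_1\delta h}{l}\right),
$$
which already yields pointwise convergence. To upgrade this to local uniform convergence I would fix a compact set $K \subset (0,\infty)\times(0,\infty)$, use that on $K$ there are constants $l_{\min} > 0$ and $h_{\max} < \infty$ with $l \geq l_{\min}$ and $h \leq h_{\max}$, and estimate the gap on $K$ from above by $\tfrac{1}{\kappa}\log(1 + C_1\delta h_{\max}/l_{\min})$, which tends to $0$ as $\delta \to 0$ independently of $(l,h)\in K$. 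Hence $\sup_{(l,h)\in K}\big|V_\delta(l,h) - (M + \log(\kappa l)/\kappa)\big| \to 0$, which is precisely the assertion.

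The only point requiring any genuine care — and the closest thing here to an obstacle — is verifying that the constant $C_1$ in the upper bound is truly independent of $\delta$, so that the displayed gap estimate is uniform in $\delta$ as $\delta \to 0$. This is visible from the construction in the proof of the theorem: the initial wealth equivalent is $f(h) = \delta\, E_h\!\big[\int_0^\infty e^{-rt}\xi_t H_t\,dt\big]$, and the bound $f(h) < C_1\delta h$ rests on the linear-in-$h$ estimate $E_h\!\big[\int_0^\infty e^{-rt}\xi_t H_t\,dt\big] < C_1 h$, in which $\delta$ has been factored out and $C_1$ depends only on $r,\alpha,\sigma,\mu,\eta,\rho$ together with the synthetic-asset parameters. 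A further remark worth making explicit is that ``locally uniformly'' must be understood on compact subsets of the open quadrant, i.e. with $l$ bounded away from $0$: near $l = 0$ the logarithmic term is unbounded and no uniform statement is available.
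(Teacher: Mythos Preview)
Your squeeze argument from the two-sided bound is exactly the reasoning the paper intends: the corollary is stated there without proof, as an immediate consequence of the theorem, and the subsequent corollary uses the same difference-of-logarithms estimate you wrote down.

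One small refinement concerning the point you yourself flag as the only delicate one. You assert that $C_1$ depends only on $r,\alpha,\sigma,\mu,\eta,\rho$ and the synthetic-asset parameters, hence not on $\delta$. But note that the illiquid asset has drift $\mu-\delta$, so the process $H_t$ itself, and therefore the expectation $E_h\big[\int_0^\infty e^{-rt}\xi_t H_t\,dt\big]$, does depend on $\delta$. What one actually gets is $C_1=C_1(\delta)$, a continuous function of $\delta$ that stays bounded as $\delta\to 0$ (the drift increases to $\mu$, and under the standing integrability assumptions the integral remains finite at $\delta=0$). Boundedness of $C_1(\delta)$ is all your squeeze needs, since the gap is controlled by $C_1(\delta)\,\delta\to 0$; so the argument goes through unchanged, only the justification of that step should be phrased as ``$C_1(\delta)$ remains bounded'' rather than ``$C_1$ is independent of $\delta$''.
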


\begin{corollary}
	\label{lm:asymptV}
	$V(l, h) = M + \log(\kappa l)/\kappa + O(1/z)$ as $z = l/h \to \infty$. Also for the function $w(z)$ we obtain
	\begin{equation}
		w(z) = (M - K) + \frac{\log (\kappa z)}{\kappa} + O(1/z), \label{pr:aw}
	\end{equation}
\end{corollary}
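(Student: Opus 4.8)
The plan is to obtain both asymptotic statements directly from the two-sided estimate~(\ref{thm:boundslog}), combined with the homotheticity reduction~(\ref{eq:ValueFuncLog}) and the identification of $v$ with the dual value function $w$.

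First I would handle $V(l,h)$. The bound~(\ref{thm:boundslog}) gives, for every $l,h>0$,
\[
 0 \;\le\; V(l,h) - \Bigl(M + \frac{\log(\kappa l)}{\kappa}\Bigr) \;\le\; \frac{1}{\kappa}\Bigl(\log\bigl(\kappa(l+C_1\delta h)\bigr) - \log(\kappa l)\Bigr) = \frac{1}{\kappa}\log\Bigl(1+\frac{C_1\delta h}{l}\Bigr).
\]
Since $h/l=1/z$ and $\log(1+x)\le x$ for $x\ge 0$, the right-hand side is at most $C_1\delta/(\kappa z)$, hence $O(1/z)$ as $z\to\infty$; this is uniform in $l,h$ because the constant $C_1$ in~(\ref{thm:boundslog}) does not depend on them. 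This is exactly $V(l,h)=M+\log(\kappa l)/\kappa+O(1/z)$, and it recovers $V(l,0)$ as the $z\to\infty$ limit, in agreement with Theorem~\ref{thm:main_exp}(ii).

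Next I would transfer this to $w$. By the reduction~(\ref{eq:ValueFuncLog}), $V(l,h)=v(z)+\log h/\kappa+\frac{1}{\kappa^2}(\mu-\delta-\frac{\eta^2}{2})$ with $z=l/h$; and, as noted when the dual problem was set up, $v$ and $w$ are value functions of optimization problems whose reduced HJB equations coincide, so the uniqueness part of Theorem~\ref{MT} forces $v\equiv w$. Substituting $l=zh$ into the estimate just obtained gives $V(zh,h)=M+\log(\kappa z)/\kappa+\log h/\kappa+O(1/z)$, and therefore
\[
 w(z)=v(z)=V(zh,h)-\frac{\log h}{\kappa}-\frac{1}{\kappa^2}\Bigl(\mu-\delta-\frac{\eta^2}{2}\Bigr)=(M-K)+\frac{\log(\kappa z)}{\kappa}+O(1/z),
\]
with $K:=\frac{1}{\kappa^2}(\mu-\delta-\frac{\eta^2}{2})$, which is~(\ref{pr:aw}).

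I expect no genuine obstacle here: the bound~(\ref{thm:boundslog}) carries the weight, and what remains is the one-line estimate $\log(1+x)\le x$. The only points that need care are the uniformity of the $O(1/z)$ remainder — immediate from the $l,h$-independence of $C_1$ — and keeping track of the additive constant in the statement for $w$, which rests on the identification $v\equiv w$ via uniqueness of the viscosity solution; this same constant can also be read off~(\ref{eq:ValueFuncLog}) directly, without invoking the dual problem at all.
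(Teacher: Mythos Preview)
Your proposal is correct and follows essentially the same route as the paper: both arguments derive the $O(1/z)$ remainder directly from the two-sided bound~(\ref{thm:boundslog}) via $\log(\kappa(l+C_1\delta h))-\log(\kappa l)=\log(1+C_1\delta/z)=O(1/z)$, and then read off the asymptotic for $w$ from the reduction~(\ref{eq:ValueFuncLog}). You have simply filled in more detail than the paper, which compresses the whole argument into one displayed inequality and the remark that the formula for $w$ ``immediately follows from the form of $V(l,h)$.''
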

\begin{proof}
	Indeed,
	$$
		\left|V(l, h) - M - \frac{\log(\kappa l)}{\kappa}\right| < \left|\frac{1}{\kappa} \left( \log(\kappa(l + \delta C_1 h)) - \log(\kappa l) \right) \right| =  O\left(\frac{1}{z}\right).
	$$
	The formula immediately follows from the form of $V(l, h)$.$\bullet$
\end{proof}

Finally, we verify that the optimal policies given by (\ref{eq:optimalCLog}) and (\ref{eq:optimalHLog}) asymptotically give the Merton strategy (\ref{eq:optimalCMerton}).

\begin{lemma}
\label{lm:asWP}
	For the value function $w(z)$ holds
	\begin{equation}
		w'(z) = \frac{1}{\kappa z} + o\left(\frac{1}{z}\right), \quad z \to \infty. \label{eq:asWP}
	\end{equation}
\end{lemma}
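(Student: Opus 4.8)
The plan is to read off the asymptotics of $w'$ from the asymptotics of $w$ already established in Corollary \ref{lm:asymptV}, using only the concavity and $C^2$-regularity of $w$ (Theorem \ref{thm:smoothness}). Write, as in (\ref{pr:aw}),
\[
  w(z) = (M-K) + \frac{1}{\kappa}\log(\kappa z) + R(z), \qquad R(z) = O\!\left(\tfrac{1}{z}\right),\ z \to \infty .
\]
Since $w$ is concave and differentiable, $w'$ is nonincreasing, so for any $0 < h < z$ the mean value theorem gives the sandwich
\[
  \frac{w(z+h) - w(z)}{h} \ \le\ w'(z) \ \le\ \frac{w(z) - w(z-h)}{h}.
\]

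The next step is to expand both difference quotients with the explicit form of $w$. Using $\log\frac{z+h}{z} = \frac{h}{z} + O((h/z)^2)$ and $\log\frac{z}{z-h} = \frac{h}{z} + O((h/z)^2)$ for $h = o(z)$, together with the trivial bound $|R(z+h) - R(z)| \le |R(z+h)| + |R(z)| = O(1/z)$ (and likewise for $z-h$, which stays $\ge z/2$ once $h=o(z)$), both outer terms of the sandwich equal
\[
  \frac{1}{\kappa z} + O\!\left(\frac{h}{z^{2}}\right) + O\!\left(\frac{1}{z h}\right).
\]
Choosing $h = h(z)$ with $h(z) \to \infty$ and $h(z) = o(z)$ — for instance $h(z) = \sqrt{z}$ — both error terms become $O(z^{-3/2}) = o(1/z)$, and the sandwich forces $w'(z) = \frac{1}{\kappa z} + o(1/z)$, which is exactly (\ref{eq:asWP}).

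There is no real obstacle here; the only point that needs a moment's care is that the $O(1/z)$ remainder from $R$ must survive division by $h$, which is precisely why the increment $h$ cannot be taken bounded but must be sent to infinity (while staying $o(z)$, to keep the logarithm linearizable). If one preferred not to invoke Corollary \ref{lm:asymptV}, an alternative route would be to differentiate the reduced HJB equation (\ref{eq:dualHJBFull}) and apply the interior estimates of \cite{krylov} already used in the proof of Theorem \ref{thm:smoothness} to control $z\,w'$ and $z^{2}w''$ on dyadic blocks $[2^{k},2^{k+1}]$; but given Corollary \ref{lm:asymptV} the concavity sandwich above is shorter and self-contained.
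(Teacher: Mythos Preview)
Your argument is correct. It is, however, a genuinely different route from the paper's. The paper introduces the rescaled family $w_{\lambda}(z) = w(\lambda z) - \kappa^{-1}\log\lambda$, observes that $w_{\lambda}$ solves the same HJB equation (\ref{eq:dualHJB}) with $\delta$ replaced by $\delta/\lambda$, invokes Corollary~\ref{corr:d} to get locally uniform convergence of $w_{\lambda}$ to the Merton value function $v$, and then uses the standard fact that locally uniform convergence of concave functions entails convergence of derivatives; evaluating at $z=1$ gives $\lambda w'(\lambda) \to 1/\kappa$. Your approach instead exploits the quantitative $O(1/z)$ remainder already furnished by Corollary~\ref{lm:asymptV} and extracts the derivative asymptotics by the concavity sandwich with step $h=\sqrt{z}$, without touching the HJB equation again. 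Your route is shorter and purely real-analytic once (\ref{pr:aw}) is available; the paper's route is more structural, making transparent that $z\to\infty$ is equivalent, at the level of the equation, to $\delta\to 0$ --- an observation that can be reused elsewhere (e.g.\ in deriving the asymptotics of $z^{2}w''$ in Theorem~\ref{thm:asPC}).
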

\begin{proof}
	Consider the function $w_{\lambda}$ defined as
	$$
		w_{\lambda}(z) = w(\lambda z) - \frac{\log(\lambda)}{\kappa},
	$$
so that $w_\lambda$ solves (\ref{eq:dualHJB}) but with the term
$$
F(w_z) = \max_{c \geq - \delta}\left[-cw_z + \log (c + \delta) \right]
$$
replaced by
$$
F_{\lambda}(w_z) = \max_{c \geq - \delta/\lambda}\left[-cw_z + \log (c + \frac{\delta}{\lambda}) \right].
$$
Then, by Corollary \ref{corr:d} $w_\lambda$ converges locally uniformly to the Merton's value function
$$
	v(z) = (M - K) + \frac{\log (\kappa z)}{\kappa}.
$$
We note that $v$ solves (\ref{eq:dualHJB}) with $\delta = 0$ that is delivered by
$$
F{_\infty}(\cdot) = \lim_{\lambda \to \infty} F_{\lambda}(\cdot).
$$

Thus, since $w_\lambda$ is concave, the uniform convergence of $w_\lambda$ to $v$ implies the convergence of derivatives, so
$$
	\lim_{\lambda \to \infty} {w'}_{\lambda}(z) = v'(z) = \frac{1}{\kappa z}.
$$
Hence,
$$
	\lim_{\lambda \to \infty} {w'}_{\lambda}(1) = \lim_{\lambda \to \infty} \lambda v'(\lambda) = \frac{1}{\kappa},
$$
which proves the lemma. $\bullet$
\end{proof}

\begin{theorem}
\label{thm:asPC}
	The following asymptotic formulae hold for the optimal policies (\ref{eq:optimalCLog}) and (\ref{eq:optimalHLog}) as $z = l/h \to \infty$.
	\begin{eqnarray}
		\frac{c^{\star}}{l} &\sim& \frac{1}{\kappa}, \label{lm:asC} \\
		\frac{\pi^{\star}}{l} &\sim& \frac{\alpha - r}{\sigma^2}. \label{lm:asP}
	\end{eqnarray}
\end{theorem}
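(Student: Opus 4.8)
The plan is to reduce everything to sharp asymptotics of $v'(z)$ and $v''(z)$ as $z\to\infty$, since both optimal controls (\ref{eq:optimalCLog}) and (\ref{eq:optimalHLog}) are expressed entirely through $v$ and its first two derivatives evaluated at $z=l/h$. First I would record that, by uniqueness of the viscosity solution (Theorem~\ref{MT}) together with the fact that the reduced equations (\ref{eq:logRed}) and (\ref{eq:dualHJB}) formally coincide, $v\equiv w$; hence Lemma~\ref{lm:asWP} already gives $v'(z)=\frac{1}{\kappa z}+o(1/z)$. For the consumption policy this is enough: from (\ref{eq:optimalCLog}), $c^\star/l = 1/\bigl(z\,v'(z)\bigr)$, and since $z\,v'(z)\to 1/\kappa$, substituting gives the asymptotics (\ref{lm:asC}).

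The real content is the matching second-order statement $v''(z) = -\frac{1}{\kappa z^2}+o(1/z^2)$, which I would extract from the reduced HJB equation itself. Writing (\ref{eq:dualHJBFull}) in the form
\[
 d_2 z^2 v'' - \frac{d_1^2}{2}\frac{(v')^2}{v''} \;=\; \kappa v + 1 + \log v' - d_3 z v' - \delta v' \;=:\; R(z),
\]
and inserting $\kappa v = \kappa(M-K)+\log(\kappa z)+O(1/z)$ (Corollary~\ref{lm:asymptV}) together with $v'=\frac{1}{\kappa z}+o(1/z)$, the $\log(\kappa z)$ contributions cancel and $R(z)$ converges to a finite constant $R_\infty$. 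Setting $p(z):=-z^2 v''(z)>0$ (positivity from strict concavity, available since $w''<0$ on compacts by Step~4 of the proof of Theorem~\ref{thm:smoothness}) and using $z^2(v')^2=\frac{1}{\kappa^2}(1+o(1))$, the displayed identity becomes $-d_2\,p(z) + \frac{d_1^2}{2\kappa^2\,p(z)}(1+o(1)) = R(z)$.

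A $\limsup$/$\liminf$ argument then finishes the job: along any sequence $z_n\to\infty$ with $p(z_n)$ convergent, the limit $p_\infty$ solves $d_2 p_\infty^2 + R_\infty p_\infty - \frac{d_1^2}{2\kappa^2}=0$; since $d_2=\tfrac12\eta^2(1-\rho^2)>0$ the value $\limsup p$ is finite, and since $d_1\neq0$ the value $\liminf p$ is strictly positive, so both equal the unique positive root of this quadratic. That root is $1/\kappa$, which can be verified directly, or deduced more conceptually from the observation that Merton's function $v_M(z)=(M-K)+\frac{\log(\kappa z)}{\kappa}$, for which $-z^2 v_M''=1/\kappa$, solves (\ref{eq:dualHJBFull}) with $\delta=0$ while $\delta$ drops out of $R_\infty$. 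Thus $p(z)\to 1/\kappa$, i.e. $v''(z)\sim -\frac{1}{\kappa z^2}$, so $v'(z)/v''(z)\sim -z$; substituting $v'\sim\frac{1}{\kappa z}$ and $v'/v''\sim -z$ into (\ref{eq:optimalHLog}) and simplifying with the definitions (\ref{constd}) collapses $\pi^\star/l$ to $\frac{\alpha-r}{\sigma^2}$, which is (\ref{lm:asP}).

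The main obstacle is precisely this passage from zeroth-order information ($v$, $v'$) to the second derivative: uniform convergence of concave functions controls first derivatives but not second ones, so one is forced either to exploit the algebraic structure of the HJB equation as above, or, alternatively, to run an interior elliptic-regularity argument on the rescaled family $w_\lambda$ of Lemma~\ref{lm:asWP}. In the latter route, once the uniform bounds on $w_\lambda$, $w_\lambda'$, $w_\lambda''$ from the proof of Theorem~\ref{thm:smoothness} are in force the rescaled equations are uniformly elliptic on compact subsets of $(0,\infty)$, so $w_\lambda\to v_M$ in $C^2_{\mathrm{loc}}$ and evaluating $\lambda^2 w''(\lambda)=w_\lambda''(1)\to v_M''(1)=-1/\kappa$ gives the same conclusion.
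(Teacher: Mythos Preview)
Your approach is essentially the same as the paper's: both obtain (\ref{lm:asC}) directly from Lemma~\ref{lm:asWP}) and then extract the limit of $z^2v''(z)$ by exploiting that the reduced HJB equation (\ref{eq:logRed}) is quadratic in $v''$. The paper simply solves this quadratic via the root formula and passes to the limit in the coefficients, whereas you set $p=-z^2v''$, show the right-hand side converges, and use a $\limsup/\liminf$ argument to force $p$ to the positive root---a more careful execution of the same idea, with the additional bonus of the $C^2_{\mathrm{loc}}$-convergence alternative.
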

\begin{proof}
	The relation (\ref{lm:asC}) immediately follows from Lemma~\ref{lm:asWP}. For the second part, we rewrite (\ref{eq:optimalHLog}) in a form
	$$
		\frac{\pi^{\star}}{l} = \frac{\eta \rho}{\sigma} - \frac{k_1}{\sigma^2}\frac{zv'(z)}{z^2v''(z)}.
	$$
	To calculate the limit value of $z^2v''(z)$ we rewrite (\ref{eq:logRed}) as a quadratic equation with respect to $w_{zz}$. Since $w_{zz} < 0$ we choose the negative root and obtain
	$$
		w''(z) = \frac{-B - \sqrt{B^2 - 4AC}}{2A},
	$$	
	where
	\begin{eqnarray}
		A = \frac{1}{2}\eta^2(1-\rho)^2z^2, ~~ B = k(zw') - 1 - (M - C)\kappa + o(1), ~~ C = -\frac{k_1^2}{2\sigma^2}(w')^2. \nonumber
	\end{eqnarray}
	Expanding all constants and using $zw' = 1/\kappa + o(1)$ we finally get
	$$
		z^2w''(z) = \frac{(\alpha - r)l}{\sigma^2} + o(1). \bullet
	$$
\end{proof}

The facts that the solution exists, is unique and smooth give an opportunity for numerical calculations. For example, basing on a script, developed by {\em Andersson, Svensson, Karlsson and Elias}, see \cite{goten}, with some modifications and corrections of minor mistakes we can obtain the solution for the exponential case and compare it with a two-dimensional Merton solution as shown on the Figure \ref{fig:Mer}.

\section{The case of Weibull distributed liquidation time and logarithmic utility function} \label{weibullsec}

One of the most natural ways to extend the framework of a randomly distributed liquidation time that we have described in the Section \ref{problem} is to introduce a distribution with a probability density function that has a local maximum unlike exponential distribution. It is very natural to expect that the assets of a certain type might have a time-lag between the moment when the sell offer is opened and a time when someone reacts on it. From the practitioner's point of view an empirical estimation of such time-lag is a natural measure of illiquidity that can give an insight into the strategy of a portfolio management.  In this Section we look closely on a Weibull distribution that has a local maximum. The Weibull distribution is commonly used in survival analysis, in reliability engineering and failure analysis, and in industrial engineering to describe manufacturing and delivery times. It seems to be quite adequate for the studied case. We demonstrate that the proposed framework is applicable for this case, show the existence and uniqueness of the solution and using a numerical algorithm generate an insight into how this case differs from the exponential illiquid and Merton's absolutely liquid cases.

In this Section we will discuss the case when the liquidation time $\tau$ is a random Weibull-distributed variable independent of the Brownian motions $(W^1, W^2)$.\\
The probability density function of the Weibull distribution is
	$$
		\phi(x, \lambda, k) = \begin{cases} \frac{k}{\lambda}\left(\frac{t}{\lambda}\right)^{k-1}e^{-(t/\lambda)^k}, & t \geq 0 \\ 0, & t < 0 \end{cases}.
	$$
	
	Let us also introduce as before the cumulative distribution function
	 \begin{equation} \label{WeibullPhi}
		\Phi(x, \lambda, k) = \begin{cases} 1 - e^{-(t/\lambda)^k}, & t \geq 0 \\ 0, & t < 0 \end{cases}
	\end{equation}
	and the survival function $\overline{\Phi}(t) = 1 - \Phi (t)$. We will often omit the constant parameters $\lambda$ and $k$ in notations for shortness.

 It is important to notice that when $k=1$ the Weibull-distribution turns into exponential one, that we have already discussed before and for $k>1$ its probability density has a local maximum. This situation corresponds to our economical motivation.\\
 The equation (\ref{eq:HJB21}) is the same as before but the term that corresponds to $\overline{\Phi}$ is naturally replaced by Weibull survival function
 \begin{eqnarray}
V_t (t, l, h) &+& \frac{1}{2}\eta^2h^2V_{hh} (t, l, h) + (rl + h)V_l (t, l, h) \nonumber \\
  &+& (\mu - \delta) hV_h (t, l, h) + \max_{\pi} G[\pi] + \max_{c \geq 0} H[c] = 0, \label{eq:HJBWei}\\
G[\pi] &=& \frac{1}{2}V_{ll}(t, l, h)\pi^2 \sigma^2 + V_{lh}(t, l, h)\eta\rho\pi\sigma h \nonumber \\
      &+& \pi(\alpha - r)V_l(t, l, h), \label{eq:GmaxWei} \\
H[c] &=& -cV_l (t, l, h) + e^{-(t/\lambda)^k} U(c), \label{eq:HmaxWei}
\end{eqnarray}

\begin{proposition}\label{propWei}
All the conditions of the Theorem \ref{MT} hold for the case of the Weibull distribution and, therefore, there exists a unique solution for the problem (\ref{eq:HJBWei}).
\end{proposition}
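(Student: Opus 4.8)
The plan is to verify, one hypothesis at a time, that the pair $(U(c)=\log c,\ \overline{\Phi}(t)=e^{-(t/\lambda)^k})$ satisfies conditions (1)--(4) of Theorem~\ref{MT}, and then to invoke that theorem directly, observing that (\ref{eq:HJBWei})--(\ref{eq:HmaxWei}) is exactly the specialisation of the general HJB equation (\ref{eq:HJB21}) obtained by taking the survival function to be the Weibull one and $U=\log$.

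Conditions~(1), (3) and (4) are purely analytic facts about the logarithm and do not involve the distribution at all. First, $U(c)=\log c$ is $C^\infty(0,+\infty)$ with $U'(c)=1/c>0$ and $U''(c)=-1/c^2<0$, hence strictly increasing, strictly concave and twice differentiable, which is~(1). For~(3) I would use the elementary bound $\log x\le \tfrac1\gamma x^\gamma$, valid for every $x>0$ and every $\gamma\in(0,1)$; applied with $x=1+c$ it gives $\log c\le\log(1+c)\le\tfrac1\gamma(1+c)^\gamma$, so~(3) holds with $M=1/\gamma$ for any $\gamma\in(0,1)$. Finally $U'(0^+)=\lim_{c\to0}1/c=+\infty$ and $U'(+\infty)=\lim_{c\to\infty}1/c=0$, which are the Inada conditions required in~(4).

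The only hypothesis that genuinely uses the Weibull law is~(2), and this is where essentially all the (modest) work lies. One must restrict to $k\ge1$ --- precisely the regime singled out by the economic motivation of this Section, with $k=1$ reducing to the exponential case of Section~\ref{expsec}. For $k\ge1$ the survival function $\overline{\Phi}(T)=e^{-(T/\lambda)^k}$ decays at least as fast as $e^{-T/\lambda}$: it is $\sim e^{-\kappa T}$ with $\kappa=1/\lambda$ when $k=1$, and strictly faster when $k>1$, so the tail requirement of~(2) holds. It then remains to check $\lim_{T\to\infty}\overline{\Phi}(T)\,E[\log c(T)]=0$. The argument is the standard growth estimate: from the wealth dynamics $dL_t=(rL_t+\delta H_t+\pi_t(\alpha-r)-c_t)\,dt+\pi_t\sigma\,dW^1_t$ together with the admissibility and integrability conditions on $(\pi,c)$, the expected wealth --- and hence any admissible consumption --- grows at most exponentially in $T$, so by Jensen's inequality $E[\log c(T)]$ grows at most linearly in $T$; multiplying a linearly growing quantity by $e^{-(T/\lambda)^k}$ with $k\ge1$ yields the limit $0$. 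Equivalently, this is exactly the hypothesis (\ref{eq:utilitycondition}) of Proposition~\ref{proposition}, which the Weibull survival function satisfies for $k\ge1$.

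With (1)--(4) established, Theorem~\ref{MT} applies verbatim and delivers the existence and uniqueness of the viscosity solution of (\ref{eq:HJBWei}). I do not anticipate a real obstacle: the only point needing care rather than a new idea is making the growth bound in~(2) precise, and the only caveat to keep in mind is the restriction $k\ge1$, since for subexponential tails ($0<k<1$) the comparison with the Merton closed-form solution with exponential discounting used in the proof of Lemma~\ref{PVF}(iv) --- and therefore Theorem~\ref{MT} itself --- is no longer available.
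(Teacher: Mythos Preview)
Your proposal is correct and follows essentially the same approach as the paper: the paper's proof is a one-sentence remark that conditions (1), (3), (4) concern only the logarithmic utility and are unchanged, while condition (2) is satisfied by the Weibull survival function for $k>1$. You simply flesh out each verification explicitly, which the paper does not bother to do; the only minor discrepancy is that the paper restricts to $k>1$ whereas you allow $k\ge 1$, a harmless difference since $k=1$ is the exponential case treated separately in Section~\ref{expsec}.
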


 Indeed the conditions $1.,$ $3.$ and $4.$ are not altered since we work with the same logarithmic utility and one can easily see that the cumulative function described in (\ref{WeibullPhi}) satisfies the condition $2.$ for the case $k>1$.

 Analogously to the equation (\ref{eq:HJBfinal}) one can obtain a two dimensional equation using a known reduction $z=l/h$. We study all the symmetry reductions of this model for the exponential and Weibull case in \cite{boya}. Yet here let us just list a two dimensional equation that corresponds to the Weibull case

 \begin{equation} \label{eq:HJBWei2}
 W_t - \frac{d_1^2}{2}\frac{(W_z)^2}{W_{zz}}  + d_2 z^2W_{zz} + d_3 z W_z + \delta W_z -  e^{-(t/\lambda)^k}\log W_z = 0,
\end{equation}
where $d_1, d_2$ and $d_3$ correspond to the constants for the general case (\ref{constd}).

 The function $\Psi_1(t) = \int_t^\infty \bPhi(s) ds$ can be defined explicitly as $\Psi_1(t)=\frac{\lambda}{k}\Gamma \left( \frac{1}{k},\left(\frac{t}{\lambda}\right)^k \right)$, where $\Gamma(\alpha, x)$ is an incomplete gamma function.
 For this function we can use the series representation by Laguerre polynomials and asymptotic representation \cite{absteg}, \cite{jel}.

 The lower bound for $W(z,t)$ can be found exactly as in Section \ref{bounds}
 $$ W(z, t) = V(t, l, h) - \Psi_1 \log h - \Psi_2(t) \geq \Psi_1(t) \log z + (\Theta(t) - \Psi_2(t)),$$
 where the behavior of the functions $\Psi_1,\Psi_2$ and $\Theta$ by $t \to \infty$ can be now well defined.

 The equation (\ref{eqnpsi2}) for the auxiliary function  $\Psi_2'(t)$ takes the form
 \begin{eqnarray} \label{eqnpsi2Weibull}
 \Psi_2'(t) + \lb(-\frac{\eta^2}{2} +(\mu-\delta)\rb)\frac{\lambda}{k}\Gamma \left( \frac{1}{k},\left(\frac{t}{\lambda}\right)^k \right) - e^{-(t/\lambda)^k}((t/\lambda)^k +1)=0 ,\\
 \Psi_2(t)\to 0, t\to \infty, \nonumber
 \end{eqnarray}
 The solution for this equation can be found explicitly
 \begin{equation} \label{psi2Weibull}
 \Psi_2 (t) = -\left(-\frac{\eta^2}{2} + (\mu - \delta) \right)\frac{\lambda}{k} \Gamma \left( \frac{1}{k}, \left(\frac{t}{\lambda} \right)^k \right) + e^{-\left(\frac{t}{\lambda}\right)^k} \left( \left(\frac{t}{\lambda} \right)^k +1 \right)
 \end{equation}
Equation (\ref{eqnpsi3}) for $\Theta$ is now
 \begin{eqnarray}
&&\Theta' +  \lb(r + \frac{1}{2}\frac{(\alpha-r)^2}{\sigma^2}\rb)\frac{\lambda}{k}\Gamma \left( \frac{1}{k},\left(\frac{t}{\lambda}\right)^k \right) \label{eqnpsi3Weibull}\\
&& -e^{-(t/\lambda)^k}\left(e^{-(t/\lambda)^k} +(t/\lambda)^k + \log \lambda - \log k + \log {\Gamma \left( \frac{1}{k},\left(\frac{t}{\lambda}\right)^k \right)}\right) = 0. \nonumber
\end{eqnarray}
And one can find an explicit solution for it as well
\begin{eqnarray} \label{psi3Weibull}
 \Theta(t) &=& -\left(r + \frac{(\alpha - r)^2}{2 \sigma^2} \right)\frac{\lambda}{k} \Gamma\left(\frac{1}{k}, \left(\frac{t}{\lambda} \right)^k \right) + \nonumber\\
&+& e^{-\left(\frac{t}{\lambda}\right)^k} \left( e^{-\left(\frac{t}{\lambda}\right)^k} + \left(\frac{t}{k} \right)^k + \ln \left( \frac{\lambda}{k} \Gamma\left(\frac{1}{k}, \left(\frac{t}{\lambda} \right)^k\right)\right) \right).
 \end{eqnarray}

Since $\frac{1}{k} > 0$ we can show that asymptotically as $t \to \infty $
\begin{eqnarray}
\Psi_1 (t) &\to& \frac{\lambda^k}{k} (t)^{1-k} e^{-(t/\lambda)^k}\left(1+O \left(t^{-k}\right)\right)\nonumber\\
\Psi_2 (t) &\to&  -\frac{1}{k} t e^{-(t/\lambda)^k} \left(1 + O \left(t^{-k}\right)\right), ~~k>1  ~~ \nonumber\\
\Theta (t) &\to&  \frac{\lambda - k}{\lambda k} t e^{-(t/\lambda)^k} \left(1 +  \frac{(k-1) k \lambda}{\lambda - k} t^{-k} \ln t +  O \left(t^{-k}\right)\right), ~~k>1  ~~  \nonumber
\end{eqnarray}

It follows from the asymptotic behavior that the value function in (\ref{eq:HJBWei2}) tends to zero faster than $e^{-\kappa t}$ and consequently Theorem \ref{MT} is applicable for the Weibull-distributed liquidation time.

On the Figure \ref{fig:Mer} one can see the results of the numerical simulation for consumption and investment strategies that we run for a Weibull and exponential case. As the parameter $k$ that is responsible for the form of Weibull distribution in (\ref{WeibullPhi}) increases the optimal policies differ significantly from the exponential liquidation-time case. As $z$ increases, i.e. the illiquid part of the portfolio becomes insufficiently small, we can see that all the policies tend to one solution which is, in fact, a Merton solution for a two-asset problem derived in \cite{Merton}.

\begin{figure}
\includegraphics[width=1\textwidth]{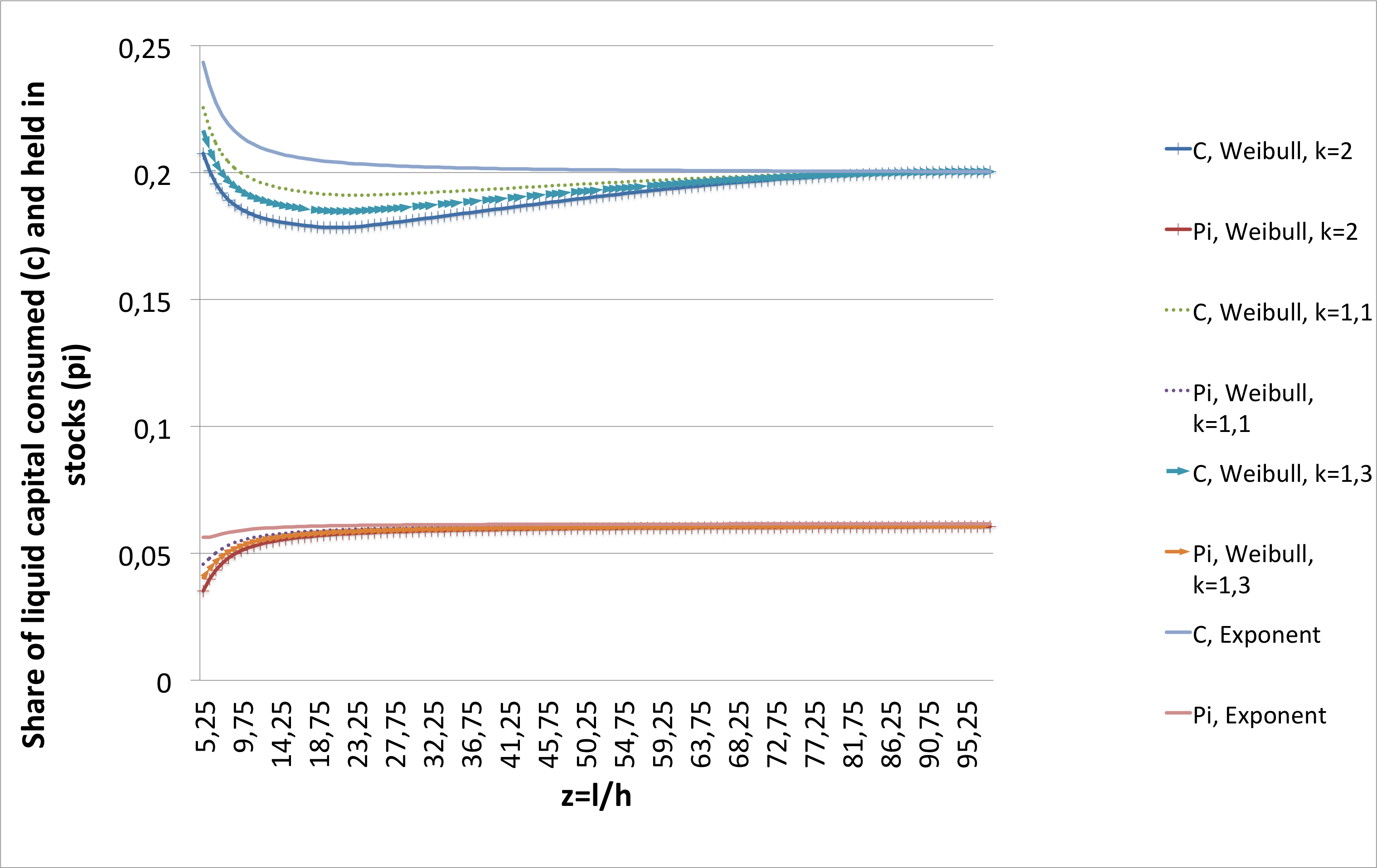}
\caption{Consumption stream $c$ and the share of liquid capital $\pi$ stored in a risky asset depending on the ratio between the liquid and illiquid asset. As illiquid asset value becomes infinitely small the policies tend to Merton policies for a two-asset problem. We used the following parameters for assets $r = 0.01,  \sigma = 0.5, \delta = 0.02, \rho =0.4, \mu = 0.05, \eta = 0.3$ and $\lambda = 2$ } \label{fig:Mer}
\end{figure}

It is especially important to note that the optimal policies significantly differ from Merton solution when illiquidity becomes higher. Already when an amount of illiquid asset is more than $5\%$ of the portfolio value the percentage of capital that is not invested in a risky stock is higher than in Merton model.

\section{Conclusion}

We have proposed a framework with which one can obtain a management strategy for a portfolio that consists of a liquid riskless and liquid risky assets and of an illiquid asset. We suppose that illiquid asset is liquidated in a random moment of time that has a prescribed distribution. In Theorem \ref{MT} we have proved the existence and uniqueness of the solution for a variety of the portfolio optimization problems. We have applied the obtained theorem to two different cases of exponential and Weibull liquidation time distributions. For the exponentially distributed random liquidation time but for existence and uniqueness we have proved the smoothness of the solution and found a lower and upper bound. For the Weibull distributed liquidation time with parameter $k>1$ we have demonstrated the applicability of a general Theorem \ref{MT} that proves the existence and uniqueness of a viscosity solution and also found a lower and upper bound for it. We have also demonstrated numerically that the resulting strategies for such portfolio differ from the Merton case yet tend to it when illiquidity becomes infinitely small.

% Author, Article title, Journal, Volume, page numbers (year)
% Author, Book title, page numbers. Publisher, place (year)


\begin{thebibliography}{12}
\bibitem[1]{absteg}
M. Abramowitz, I. A. Stegun, {\em Handbook of Mathematical Functions}, National Bureau of Standards, Applied Mathematic Series 55, (1972)

\bibitem[1]{goten}
A. Andersson, J. Svensson, J. Karlsson, O. Elias, \emph{The optimal consumption problem}, CPL ID: 158549, Available at: publications.lib.chalmers.se/records/fulltext/158549.pdf, (2012)

\bibitem[2]{Papanikolaou}
A. Ang, D. Papanikolaou, M. M. Westerfeld, \emph{Portfolio Choice with Illiquid Assets}, Available at SSRN: http://ssrn.com/abstract=1697784 or http://dx.doi.org/10.2139/ssrn.1697784

\bibitem[3]{Bangia}
A. Bangia, F. X. Diebold, T. Schuermann and J. D. Stroughair, {\em Modeling Liquidity Risk, With Implications for Traditional Market Risk Measurement and Management}, Working Paper, Department of Finance, Stern School of Business, New York University, FIN-99-062

\bibitem[4]{boyazh}
L. A. Bordag, I. P. Yamshchikov and D. Zhelezov, {\em Optimal allocationÐconsumption problem for a portfolio with an illiquid asset}, International Journal of Computer Mathematics, http://dx.doi.org/10.1080/00207160.2013.877584, (2014)

\bibitem[5]{boya}
L. A. Bordag, I. P. Yamshchikov, {\em Optimization problem for a portfolio with an illiquid asset: Lie group analysis}, in preparation, (2014)

\bibitem[6]{userguide}
M. G. Crandall, H. Ishii and P.-L. Lions, {\em User's Guide to Viscosity Solutions of Second Order Partial Differential Equations}, Bulletin of the American Mathematical Society, {\bf 27}, (1992)

\bibitem[7]{EngleRussel1998}
R. F. Engle and J. R. Russel, {\em Autoregressive conditional duration: A new model for irregularly spaced transaction data}, Econometrica, vol. 66, n. 5, pp. 1127 -- 1162, (1998)

\bibitem[8]{ZaripDuf}
 D. Duffie, and T. Zariphopoulou, {\em Optimal investment with undiversifiable income risk}, Mathematical Finance {\bf 3}, 135 -- 148, (1993)

\bibitem[9]{ZaripDufFlem}
D. Duffie, W. Fleming, H. M. Soner and T. Zariphopoulou,{\em Hedging in incomplete markets with HARA utility}. Journal of Economic Dynamics and Control, {\bf 21}, 753 -- 782, (1997)

\bibitem[10]{FlemingSoner}
 W. H. Fleming, H. M.Soner, {\em Controlled Markov Processes and Viscosity Solutions}. New York, Springer, ISBN: 9780387260457, (2006)

\bibitem[11]{He}
H. He,  {\em Utility maximization of a portfolio that includes an illiquid ass}. The University of Wisconsin - Milwaukee, ProQuest, UMI Dissertations Publishing. 3332169, (2008)

\bibitem[12]{HuangPages}
C. F. Huang and H. Pages {\em Optimal consumption and portfolio policies with an infinite horizon: Existence and convergence}. Annals of Applied Probability {\bf 2}, 36 -- 64, (1992)

\bibitem[13]{jel}
E. Lanke, F. Emde and F. L\"{o}sch {\em Tafeln H\"{o}nerer Funktionen}, Stuttgart, Teubner, (1960)

\bibitem[14]{KaratzasShreve}
I. Karatzas, J. Lehoczky, S. Sethi and S. Shreve {\em Explicit solution of a general consumption/investment problem}. Mathematics of Operations Research {\bf 11}, 261 -- 294, (1986)

\bibitem[15]{krylov}
N. Krylov {\em Controlled Diffusion Processes}. New York: Springer-Verlag, (1980)

\bibitem[16]{Merton}
R. Merton, {\em Optimum consumption and portfolio rules in a continuous time model}. Journal of Economic Theory {\bf 3}, 373 -- 413, (1971)

\bibitem[17]{Munk}
C. Munk, {\em Optimal consumption/investment policies with undiverisfiable income risk and liquidity constraints}. Journal of Economic Dynamics and Control {\bf 24}, pp. 1315 -- 1343, (2000)

\bibitem[18]{Ornau2005}
F. Ornau, {\em Statistische Eigenschaften von Prozessen mit autoregressiver bedingter Wartezeit}, {Statistical properties of the processes with autoregressive conditional waiting times}, Inauguraldissertation, Saarbr{\"u}cken (2005)

\bibitem[19]{tebaldi}
E. S. Schwartz and C. Tebaldi, {\em Illiquid Assets and Optimal Portfolio Choice}. NBER Working Paper Series, {\bf 12633}, (2006)

\bibitem[20]{Zarip1992} T.\ Zariphopoulou \quad(1992)\\
{\em Investment-consumption models with transaction fees and Markov-chain parameters}.
SIAM J. Control and Optimization30, 613Ð636

\bibitem[21]{Zarip1999}
T. Zariphopoulou, {\em Optimal Investment and consumption models with non-linear stock dynamics}, Math Meth. Oper. Res., {\bf50}, 271 -- 296,  (1999)


\end{thebibliography}
\end{document}